\newtheorem{lem}{Lemma}
\newtheorem{theor}[lem]{Theorem}
\begin{document}
%
\title{Expander-like Codes based on Finite Projective Geometry}
%
%
%

\author{\normalsize{Swadesh Choudhary$^1$,
Hrishikesh Sharma$^1$, B.S. Adiga$^2$ and Sachin Patkar$^1$}
\\
\normalsize{$^1$Electrical
        Engineering, IIT Bombay.
         Email: \{patkar,swadesh,hsharma\}@ee.iitb.ac.in} \\
\normalsize{$^2$TCS Innovation Labs, Bangalore.
Email: bs.adiga@tcs.com}
\thanks{Swadesh Choudhary is currently with Stanford University}
\vspace{-.3in}
}
\date{}
\maketitle

\begin{abstract}
We present a novel error correcting code and decoding algorithm which have
construction similar to expander codes. The code is based on a bipartite
graph derived from the subsumption relations of finite projective
geometry, and Reed-Solomon codes as component codes. We use a
modified version of well-known Zemor's decoding algorithm for expander
codes, for decoding our codes.  By derivation of geometric bounds rather
than eigenvalue bounds, it has been proved that for practical values of the
code rate, the random error correction capability of our codes is much better
than those derived for previously studied graph codes, including Zemor's
bound. MATLAB simulations further reveal that the average case
performance of this code is 10 times better than these geometric bounds
obtained, in almost 99\% of the test cases. By exploiting the symmetry of
projective space lattices, we have designed a corresponding decoder that
has \textit{optimal throughput}. The decoder design has been prototyped on
Xilinx Virtex 5 FPGA. The codes are designed for potential applications in
secondary storage media. As an application, we also discuss usage of these
codes to improve the burst error correction capability of CD-ROM decoder.
\end{abstract}
\begin{keywords}
Expander Codes, Bipartite Graph, Finite Projective Geometry.
\end{keywords}


\section{Introduction}
\label{intro_sec}

Graph codes have been studied and analyzed in past, in
order to try and find codes that give good error correction capability at
high code rates \cite{hoholdt}, \cite{janwa}, \cite{zemor}. At the same
time, from a practical implementation point of view, such codes require to
have decoding and encoding algorithms that are efficient in terms of speed and
hardware complexity.

\textit{Expander Codes}, first suggested by Sipser and Spielman \cite{sipser},
proved to be theoretically capable of providing asymptotically good codes that
were decodable in logarithmic time and could be implemented with a circuit
whose size grew linearly with code size. These codes are constructed using
a special graph, known as \textit{expander graph}, and by embedding
identical component codes at the nodes of this graph. \cite{sipser}
analyzed the properties of expander codes, and specified lower bound on
the number of errors that will always be corrected by one decoding
algorithm. Zemor, in his analysis in \cite{zemor}, suggested using
\textit{bipartite Ramanujan} graphs for constructing expander codes. He
also provided a decoding algorithm which improved the lower bound in
\cite{sipser} by a factor of 12. In \cite{hoholdt},
Hoholdt and Justesen built on the work of Tanner on graph codes, by
suggesting the use of Reed Solomon codes as sub-codes for graphs
derived from point-line incidence relations of projective planes. The
decoding speed and ease of implementation, combined with error-correction
performance that was scalable with increasing graph size made all these
codes interesting, while considering applications related to secondary
storage.

By definition, the sub-code length should remain
constant, as the order of (bipartite) expander graph increases, during
construction of a family of expander codes. For
performance reasons described later, we choose to increase the sub-code
length as well. Hence our codes may not be called expander codes, but just graph-based, or \uline{expander-like}
codes. The presented work thus deals with the construction and analysis of an
\textit{expander-like} code, which is based on a special \textit{bipartite
Ramanujan} graph. This bipartite graph is derived from point-hyperplane
incidence relations of projective spaces of \textbf{higher dimensions} than
those suggested by \cite{hoholdt}. We look at various properties of the
codes, and in the process come across several \textit{generic} interesting
properties of projective geometry. Also, we wanted the codes to be
\textit{practically useful}. Hence, in a companion paper, we present 
throughput-optimal VLSI design of decoder for such codes\cite{dmaa_pap}.

For decoding, we employ a variation of Zemor's algorithm. By
simulations using this algorithm, we
found that the codes have excellent robustness to \textit{random} as well as
\textit{burst} errors. Hence we envisage their application in data
storage systems. 

The next section provides the basic properties of cardinalities of
projective geometry that we will be using. Section 3 gives relevant
background information for the various concepts required in this paper.
Section 4 describes our code construction in detail. Sections 5-11 give the
characterization of error-correction capability for these codes, including
proofs of propositions relating to bounds on error correction capacity. The
remaining sections detail out the prototyping results, and also the
application to two types of storage media (namely CD-ROMs and DVD-R),
before concluding the paper.
 
\section{Finite Projective Spaces}
\label{pg_intro}
In this section, we look at how finite projective spaces are generated from
finite fields. For more details, refer \cite{hrishi}. Consider a finite field $\mathbb{F} =$ $\mathbb{GF}(s)$
with $s$ elements, where $s$ is a power of a prime number $p$ i.e. $s=p^{k}$,
$k$ being a positive integer. A projective space of dimension $d$ is denoted
by ${\mathbb{P}}(d,\mathbb{F})$ and consists of one-dimensional subspaces
of the $(d+1)$-dimensional vector space over $\mathbb{F}$ (an extension
field over $\mathbb{F}$), denoted by $\mathbb{F}^{d+1}$. Elements of this
vector space are of the form $(x_{1},\ldots,x_{d+1})$, where each $x_{i}
\in \mathbb{F}$. The total number of such elements are $s^{(d+1)}$ =
$p^{k(d+1)}$. An equivalence relation between these elements is defined as
follows. Two non-zero elements ${\bf{x}}$, ${\bf{y}}$ are
\textit{equivalent} if there exists an element $\lambda \in$ GF$(s)$ such
that ${\bf{x}}=\lambda {\bf{y}}$. Clearly, each equivalence class consists
of $s$ elements of the field ($s-1$ non-zero elements and ${\bf{0}}$), and
forms a one-dimensional subspace. Such 1-d vector subspace corresponds to
a \textbf{point} in the projective space. Points are the zero-dimensional
subspaces of the projective space. Therefore, the total number of points
in ${\mathbb{P}}(d,\mathbb{F})$ are
\begin{eqnarray}
P(d) &=& \frac{\textrm{\# non-zero elements in the field}}
{\textrm{\# non-zero elements in one equivalence class}} \\
&=& \frac{s^{d+1}-1}{s-1}
\end{eqnarray}
An $m$-dimensional subspace of ${\mathbb{P}}(d,\mathbb{F})$ consists
of all the one-dimensional subspaces of an $(m+1)$-dimensional subspace
of the vector space. The basis of this vector subspace will have $(m+1)$
linearly independent elements, say $b_{0},\ldots,b_{m}$. Every element of
this subspace can be represented as a linear combination of these basis
vectors.
\begin{equation}
{\bf{x}} = \sum_{i=0}^{m} \alpha_{i} b_{i}, \textrm{ where } \alpha_{i} \in \mathbb{F}(s)
\end{equation}
Clearly, the number of elements in the vector subspace are $s^{(m+1)}$.
The number of points in the $m$-dimensional projective subspace
is given by $P(m)$ defined in earlier equation.
Various properties such as degree etc. of a {\normalsize
$\mathbf{m}$}-dimensional projective subspace \textbf{remain same}, when
this subspace is \textit{bijectively} mapped to some {\normalsize $(\mathbf{d-m-1})$}-dimensional
projective subspace. The two sets of these subspaces,
one for each dimension, are said to be \uline{\textit{dual of each other}}.
The number of \textbf{d}-dimensional projective subspaces of a
\textbf{m}-dimensional projective space can be counted using the
\textit{Gaussian Coefficient}.
\begin{equation}
\phi(n,l,s)=\frac{(s^{n+1}-1)(s^{n}-1)\ldots(s^{n-l+1}-1)}{(s-1)(s^{2}-1)\ldots(s^{l+1}-1)}
\end{equation}
For {\normalsize $0 \leq l<m \leq d$}, the number of {\normalsize
$\mathbf{l}$}-dimensional projective subspaces contained in an {\normalsize
$\mathbf{m}$}-dimensional projective subspace is {\normalsize $\phi(m,l,s)$}, while the number of {\normalsize
$\mathbf{m}$}-dimensional projective subspaces containing a particular
{\normalsize $\mathbf{l}$}-dimensional projective subspace is {\normalsize
$\phi(d-l-1,m-l-1,s)$}.

\subsection{Projective Spaces as Lattices}
It is a well-known fact that the lattice of subspaces in any projective
space is a \textbf{modular, geometric lattice} \cite{hrishi}. A
projective space of dimension 2 is shown in figure \ref{pg_lat}.
In the figure, the top-most node represents the \textit{supremum},
which is a projective space of dimension {\normalsize \textbf{m}} in a lattice
for {\normalsize $\mathbb{P}(\mathbf{m},\mathbb{GF}(\mathbf{q}))$}. The
bottom-most node represents the \textit{infimum}, which is a projective
space of (notational) dimension -1. Each node in the lattice as such is a
projective subspace, called a \textbf{flat}. Each horizontal level of flats
represents a collection of all projective subspaces of {\normalsize
$\mathbb{P}(\mathbf{m},\mathbb{GF}(\mathbf{q}))$} of a particular
dimension. For example, the first level of flats above infimum are flats of
dimension 0, the next level are flats of dimension 1, and so on. Some
levels have special names. The flats of dimension 0 are called
\textit{points}, flats of dimension 1 are called \textit{lines}, flats of
dimension 2 are called planes, and flats of dimension (m-1) in an overall
projective space {\normalsize $\mathbb{P}(\mathbf{m},\mathbb{GF}(\mathbf{q}))$}
are called \textit{hyperplanes}.
\begin{figure}[h]
\begin{center}
\subfloat[PG graph]{\label{fold_bp}\includegraphics[scale=0.3]{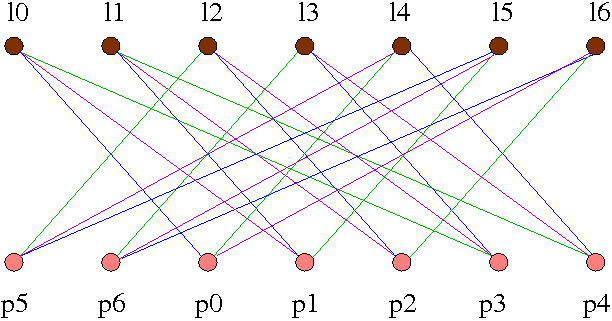}}
\quad
\subfloat[Corresponding Lattice
Diagram]{\label{pg_lat}\includegraphics[scale=0.5]{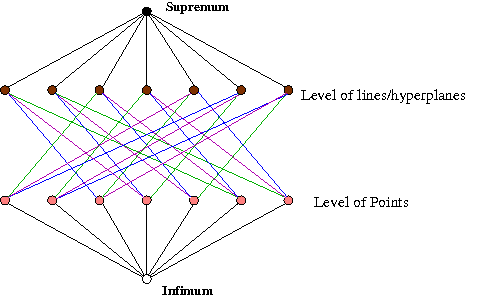}}
\end{center}
\caption{An Example PG Bipartite Graph}
\label{pg_gph}
\end{figure}

\section{Expander Codes}
\label{exp_code}
Expander codes are a family of asymptotically good, linear error-correcting
codes \cite{sipser}. They can be decoded in sub-linear time(proportional to
\textbf{log(n)}, where \textbf{n} is length of codeword) using
\textit{parallel} decoding algorithms. Further, this can be achieved using
identical \textit{component} decoders, whose count is 
proportional to \textbf{n}.  These codes are based on a class of graphs
known as \textit{expander graphs}. One construction of
expander graph, used in construction of expander codes, is by considering
the edge-vertex incidence graph $\mathbb{Z}$ of a \textbf{d}-regular graph $\mathbb{G}$.
The edge-vertex incidence graph of $\mathbb{G = (V,E)}$, a (2,d)-regular
bipartite graph, has vertex set $E\cup V$ and edge set 
\[ \left\{(e,v) \in E \times V:\; \mbox{v is an endpoint of e}\right\}\]
Vertices of $\mathbb{Z}$ corresponding to edges E of $\mathbb{G}$ are then
associated to \textit{variables}, while vertices of $\mathbb{Z}$
corresponding to vertices of $\mathbb{G}$ are associated to constraints on
these variables. Each constraint corresponds to a set of \textit{linear}
restrictions on the \textbf{d} variables that are its neighbors. In particular, a constraint will require that the
variables it restricts form a codeword in some linear code of
length \textbf{d}. This constraint forms the \textbf{sub-code} for that expander
code. Further, all the constraints are required to impose
\textbf{isomorphic} codes (on different variables, of course). The default
construction of a family of expander codes requires \textbf{d} to remain
constant, as the order of $\mathbb{G}$ increases.

Formally, Let $\mathbb{Z}$ be a (2,d)-regular graph between set of
\textbf{n} nodes called \textit{variables}, and $\frac{2}{d}$\textbf{n}
nodes called \textit{constraints}. Let \textit{z(i,j)} be a function such
that for each constraint $C_i$, the variables neighboring $C_i$ are
$v_{z(i,1)},\cdots,v_{z(i,d)}$. Let $\mathbb{S}$ be an error-correcting
code of block length \textbf{d}. The expander code $\mathbb{C(Z,S)}$ is the
code of block length \textbf{n} whose codewords are the words
$(x_1,\cdots,x_n)$ such that, for 1 $\leq$ $i$ $\leq$
$\frac{2}{d}$\textbf{n}, $x_{z(i,1)},\cdots,x_{z(i,d)}$ is a codeword of
$\mathbb{S}$.

\subsection{Expander Graphs}
\label{exp_graph}
An expander graph is a graph in which every set of vertices has an
unusually large number of neighbors. More formally, \\
Let $\mathbb{G}$ = ($\mathbb{V,E}$) be a graph with \textbf{n} vertices.
Then the graph $\mathbb{G}$ is a $\delta$-expander, if every set of
\textit{at most} \textbf{m} vertices expands by a factor of $\delta$. That
is,
\[
\forall S \subset V: \left|S\right| \leq m  \Rightarrow 
\left| \left\{ y : \exists x \in S\mbox{ s.t. } (x,y) \in E
\right\}\right| > \delta\cdot\left|S\right|
\]
Expander codes being a subclass of LDPC codes, for whom efficient iterative
decoding using variables and constraints a bipartite graph is feasible, we are
interested mainly in bipartite expander graphs.

The degree of ``goodness" of expansion, especially for regular graphs, can
also be measured using its eigenvalues. The largest eigenvalue of a
\textbf{k}-regular graph is \textbf{k}. If the second largest eigenvalue is
\textit{much smaller} from the first (\textbf{k}), then the graph is known
to be a good expander \cite{sipser}.

\subsection{Good Expander Codes}
\label{good_code_sec}
As pointed out earlier, the decoding algorithm for such codes is
\textit{iterative}. Hence good expander codes imply \textit{at least} the
following properties.
\begin{itemize}
\item Better minimum distance than other codes of same length,
\item Fast convergence, and
\item Better code rate than other codes in the same class
\end{itemize}

Good codes having the above properties can be identified with help of three
theorems proved in \cite{sipser}. For the theorems, we assume that an
expander code $\mathbb{C(Z, S)}$ has been
constructed having $\mathbb{S}$ as a linear code of rate \textbf{r}, block
length \textbf{d}, and minimum distance $\epsilon$, and
$\mathbb{Z}$ as the edge-vertex incidence graph of a \textbf{d}-regular
graph $\mathbb{G}$ with second-largest eigenvalue $\lambda$.

\begin{theor}
\label{1theor}
The code $\mathbb{C(Z, S)}$ has rate at least
2\textbf{r} - 1, and minimum relative distance at least
$\left(\frac{\epsilon - \frac{\lambda}{d}}{1 - \frac{\lambda}{d}}\right)^{2}$.
\end{theor}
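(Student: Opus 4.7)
My plan is to treat the two claims separately: the rate bound is an easy parity-check counting argument, and the minimum-distance bound combines the sub-code's own distance with a spectral/mixing estimate on $\mathbb{G}$.

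For the rate, I would observe that the codewords of $\mathbb{C}(\mathbb{Z},\mathbb{S})$ are exactly those $x \in \{0,1\}^n$ (or over the appropriate field) that satisfy, for each of the $\frac{2}{d}n$ constraint vertices, a system of at most $(1-r)d$ linear parity checks (coming from a parity-check matrix of $\mathbb{S}$). Adding these up, the total number of linear constraints is at most $\frac{2n}{d}\cdot(1-r)d = 2(1-r)n$, so the dimension is at least $n - 2(1-r)n = (2r-1)n$. This gives rate $\geq 2r-1$. This step is essentially bookkeeping.

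For the minimum relative distance, let $c$ be any nonzero codeword and let $S \subseteq E(\mathbb{G})$ be its support (recall variables of $\mathbb{C}(\mathbb{Z},\mathbb{S})$ are edges of $\mathbb{G}$). Let $T \subseteq V(\mathbb{G})$ be the set of constraint vertices of $\mathbb{G}$ that are incident to at least one edge of $S$. The first key observation is that, for every $v \in T$, the restriction of $c$ to the $d$ edges at $v$ is a \emph{nonzero} codeword of $\mathbb{S}$, hence has weight at least $\epsilon d$. Summing this lower bound over $v \in T$, and noting that each edge in $S$ is counted by its two endpoints, yields $\epsilon d \, |T| \leq 2|S|$.

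The second key ingredient is the expander mixing lemma applied to the $d$-regular graph $\mathbb{G}$ with second eigenvalue $\lambda$: the number of edges of $\mathbb{G}$ with both endpoints in $T$ is at most $\frac{d|T|^2}{2n} + \frac{\lambda |T|}{2}$. Since every edge of $S$ lies inside $T$ by construction, this upper-bounds $|S|$. Combining the lower bound $|S| \geq \frac{\epsilon d}{2}|T|$ with this upper bound, one solves the resulting quadratic inequality for $|T|/n$ and then for the relative support $|S|/N$ (where $N = nd/2$ is the block length of $\mathbb{C}(\mathbb{Z},\mathbb{S})$), which after rearrangement produces the bound $\left(\frac{\epsilon - \lambda/d}{1 - \lambda/d}\right)^{2}$.

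I expect the \emph{main obstacle} to be bookkeeping in the last step: getting the constants in the expander mixing lemma exactly right (whether one double-counts edges inside $T$, and what the correct $(1-|T|/n)$ correction factor is), and then choosing the tight worst-case value of the auxiliary parameter $\beta = 2|S|/|T|$ so that the final expression comes out in the squared form stated, rather than the weaker linear-in-$\epsilon$ form one gets from a naive substitution $\beta = \epsilon d$. The rate bound, by contrast, requires no further work beyond the dimension count above.
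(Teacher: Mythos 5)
This theorem is not proved in the paper at all: it is the Sipser--Spielman result, quoted from \cite{sipser} as background, so there is no in-paper proof to compare against; your outline is essentially the standard argument from that reference. The rate count is exactly right. For the distance bound your skeleton (nonzero restriction at every touched vertex, giving $\epsilon d\,|T|\le 2|S|$; every support edge lies inside $T$; a spectral bound on the number of edges inside $T$) is the correct one, but the mixing estimate as you wrote it, $e(T)\le \frac{d|T|^2}{2n}+\frac{\lambda|T|}{2}$, is genuinely too weak: it yields only relative weight $\ge \epsilon(\epsilon-\lambda/d)$, which can fall below the claimed $\bigl(\frac{\epsilon-\lambda/d}{1-\lambda/d}\bigr)^2$ (e.g.\ $\epsilon=0.9$, $\lambda/d=0.5$ gives $0.36$ versus $0.64$). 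You need the Alon--Chung form with the correction factor you yourself flagged, $e(T)\le \frac{|T|}{2}\bigl(d\,\tfrac{|T|}{n}+\lambda\bigl(1-\tfrac{|T|}{n}\bigr)\bigr)$; writing $t=|T|/n$ and combining with $\epsilon d\,|T|\le 2|S|$ gives $\epsilon d\le dt+\lambda(1-t)$, hence $t\ge \frac{\epsilon-\lambda/d}{1-\lambda/d}$, and then the relative weight satisfies $w=\frac{|S|}{nd/2}\ge \epsilon t\ge \epsilon\,\frac{\epsilon-\lambda/d}{1-\lambda/d}\ge \Bigl(\frac{\epsilon-\lambda/d}{1-\lambda/d}\Bigr)^2$, the last step because $\epsilon\ge \frac{\epsilon-\lambda/d}{1-\lambda/d}$ whenever $\epsilon\le 1$. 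In particular, no optimization over your auxiliary parameter $\beta=2|S|/|T|$ is needed: the direct substitution already gives a bound at least as strong as the stated square (indeed slightly stronger). Note also that the Rayleigh-quotient computation with the decomposition $x_S=\frac{|S|}{2n}\mathbf{j}+y_S$ in the paper's Appendix B is precisely the derivation of this sharper spectral estimate, so the missing ingredient is already available in the paper in a different guise.
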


\begin{theor}
\label{2theor}
If a parallel decoding round for $\mathbb{C(Z,S)}$, as given in
\cite{sipser}, is given as input a word of relative distance $\alpha$ from a
codeword, then it will output a word of relative distance at most
$\alpha\left(\frac{2}{3} + \frac{16\alpha}{\epsilon^2} +
\frac{4\lambda}{\epsilon d} \right)$ from that codeword.
\end{theor}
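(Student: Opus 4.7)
The plan is to track how the error set evolves in a single round of the Sipser--Spielman parallel decoder, using the expansion (eigenvalue) of $\mathbb{G}$ at the key steps. Let $c$ be the transmitted codeword and $w$ the received word with $|B_0| = \alpha n$ erroneous positions; write $B_1$ for the error set after one round. Call a constraint $v$ \emph{confused} if its local decoding in $\mathbb{S}$ disagrees with the restriction of $c$ to the $d$ neighbors of $v$; since $\mathbb{S}$ has minimum distance $\epsilon d$, a confused $v$ must have at least $\epsilon d/2$ of its neighbors in $B_0$. Let $U$ be the set of confused constraints.

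First I would bound $|U|$. Double-counting the $\mathbb{Z}$-edges between $U$ and $B_0$ gives $|U|\,\epsilon d/2 \leq e_{\mathbb{Z}}(U,B_0)$, and the expander mixing lemma applied to the $d$-regular graph $\mathbb{G}$ (with second eigenvalue $\lambda$) upper-bounds $e_{\mathbb{Z}}(U,B_0)$ by a term of the form $\tfrac{2d|U||B_0|}{|V(\mathbb{G})|} + \lambda\sqrt{|U||B_0|}$ after translating edges of $\mathbb{G}$ incident to $U$ into the bipartite picture. Solving the resulting inequality yields $|U| = O(|B_0|/\epsilon)$ together with a $\lambda/(\epsilon d)$-dependent defect. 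Next I would bound $|B_1|$: because every unconfused constraint correctly restores all $d$ of its neighbors, every position of $B_1$ must be incident (in $\mathbb{Z}$) to some $v \in U$. I would split $B_1 = (B_1\cap B_0) \cup (B_1 \setminus B_0)$ and bound each piece by counting edges leaving $U$ back into the variable side, again invoking the mixing lemma. A simple averaging argument shows that most positions of $B_0$ have both incident constraints unconfused and are therefore corrected; this is what produces the $\tfrac{2}{3}\alpha$ coefficient. The contributions $\tfrac{16\alpha^2}{\epsilon^2}$ and $\tfrac{4\lambda\alpha}{\epsilon d}$ arise respectively from a combinatorial density bound and from the expansion defect.

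The main obstacle I expect is sharpness of constants. The mixing lemma must be invoked twice (once to control $|U|$, once to count edges out of $U$), and Cauchy--Schwarz has to be applied carefully so that $\tfrac{2}{3}$, $16$, and $4$ emerge exactly; a naive application yields a bound of the same qualitative form but with weaker constants. Once $|U|$ and the two edge counts are in hand, assembling them into $\alpha\bigl(\tfrac{2}{3} + \tfrac{16\alpha}{\epsilon^2} + \tfrac{4\lambda}{\epsilon d}\bigr)$ reduces to routine algebra.
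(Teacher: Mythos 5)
First, a point of reference: the paper does not prove Theorem \ref{2theor} at all — it is quoted from Sipser and Spielman \cite{sipser} as background for the design criteria of Section 3.2 — so your attempt has to be measured against the original argument in \cite{sipser} rather than against anything in this paper. Your sketch does capture the general shape of that argument (classify constraints by how many corrupt variables they see, bound the bad set by counting incidences, use the spectral gap of $\mathbb{G}$ to control what survives), but the way you distribute the work does not produce the stated bound, and the steps you defer as ``routine algebra'' are exactly where the content of the theorem lies.

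Concretely: (i) you never specify the parallel decoding rule. In the edge--vertex incidence construction every variable has exactly two constraints, and the constants $2/3$, $16$, $4$ encode precisely how the round resolves the two (possibly conflicting) local decodings; in \cite{sipser} a constraint issues corrections only when its local word is within $\epsilon d/4$ of a codeword of $\mathbb{S}$, and it is this $\epsilon d/4$ threshold, not your $\epsilon d/2$ confusion threshold, that drives the bookkeeping. (ii) Bounding $|U|$ needs no mixing lemma: each corrupt variable has only two endpoints, so $|U|\cdot\epsilon d/4\le 2|B_0|$ by trivial double counting (moreover, as written your mixing-lemma step is not well posed, since $B_0$ is a set of edges of $\mathbb{G}$, not of vertices). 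The spectral input enters elsewhere: one needs the Alon--Chung-type bound $e_{\mathbb{G}}(U)\le \frac{d|U|^2}{2N}+\frac{\lambda|U|}{2}$ (with $N=|V(\mathbb{G})|$) on the number of variables \emph{both} of whose constraints lie in $U$; with $|U|\le 8\alpha n/(\epsilon d)$ this yields exactly the $\frac{16\alpha^2}{\epsilon^2}n+\frac{4\lambda\alpha}{\epsilon d}n$ count of surviving old errors. (iii) The $\frac{2}{3}\alpha$ term is not produced by ``a simple averaging argument showing most of $B_0$ has both constraints unconfused'': with $|U|=\Theta(\alpha n/(\epsilon d))$ the variables incident to $U$ can cover all of $B_0$, so no such averaging is available. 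In \cite{sipser} that term bounds the \emph{newly corrupted} positions: a constraint issuing wrong corrections must be within $\epsilon d/4$ of a wrong codeword, hence has at least $3\epsilon d/4$ corrupt neighbours and corrupts at most $\epsilon d/4$ clean ones, and double counting corrupt incidences gives at most $\frac{\epsilon d}{4}\cdot\frac{2\alpha n}{3\epsilon d/4}=\frac{2}{3}\alpha n$ new errors. Without these three ingredients your outline gives a bound of the same qualitative flavour but not the claimed one, so as it stands the proposal has genuine gaps.
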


\begin{theor}
\label{3theor}
For all $\epsilon$ such that 1- 2H($\epsilon$) $>$ 0, where H($\cdot$) is
the binary entropy function, there exists a polynomial-time
constructible family of expander codes of rate 1 - 2H($\epsilon$) and
minimum relative distance arbitrarily close to $\epsilon^2$ in which any
$\alpha$ $<$ $\epsilon^2$/48 fraction of error can be corrected by a circuit
of size O(n log n) and depth O(log n).
\end{theor}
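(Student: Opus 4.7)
The plan is to combine the three ingredients already on the table: a good inner code $\mathbb{S}$ to instantiate at each constraint, a Ramanujan-style base graph $\mathbb{G}$ to drive the expansion, and then Theorems~\ref{1theor} and~\ref{2theor} as black boxes to convert these into rate, distance, and decoding guarantees for $\mathbb{C}(\mathbb{Z},\mathbb{S})$.

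First, I would fix any $\epsilon$ with $1-2H(\epsilon)>0$ and choose the inner code. By the Gilbert--Varshamov bound, for every $\eta>0$ and every sufficiently large block length $d$ there exist binary linear codes $\mathbb{S}$ of length $d$, relative minimum distance at least $\epsilon$, and rate $r \geq 1 - H(\epsilon) - \eta$. Such codes can be found by exhaustive (or derandomized) search in time polynomial in $n$ once $d = O(\log n)$, which is the regime I will use. This immediately yields, via Theorem~\ref{1theor}, a code rate of at least $2r-1 \geq 1-2H(\epsilon)-2\eta$, so $\eta$ can be absorbed into the ``arbitrarily close'' clause.

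Next I would choose the base graph $\mathbb{G}$ to be a $d$-regular Ramanujan (or near-Ramanujan) graph, for which $\lambda \leq 2\sqrt{d-1}$, so $\lambda/d \to 0$ as $d \to \infty$. Explicit constructions of such graphs (Margulis, Lubotzky--Phillips--Sarnak) are polynomial-time. Plugging into Theorem~\ref{1theor}, the relative minimum distance is at least $\bigl((\epsilon-\lambda/d)/(1-\lambda/d)\bigr)^2$, which tends to $\epsilon^2$; by picking $d$ large enough the distance is within any prescribed tolerance of $\epsilon^2$. The same choice of $d$ will serve the decoder analysis: Theorem~\ref{2theor} says one parallel round contracts the relative error by the factor $\tfrac{2}{3}+\tfrac{16\alpha}{\epsilon^2}+\tfrac{4\lambda}{\epsilon d}$. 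For $\alpha < \epsilon^2/48$ the first two terms already sum strictly below $1$, and by enlarging $d$ the term $4\lambda/(\epsilon d)$ can be shrunk so that the contraction constant is a fixed $\rho<1$. Iterating the round therefore drives the error to zero, and since a single violated constraint suffices to proceed, $O(\log n)$ rounds are enough.

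For the circuit bound, each round is implemented by $\Theta(n/d)$ copies of a decoder for $\mathbb{S}$, each acting on $d = O(\log n)$ bits; a brute-force decoder for each copy has size $O(\mathrm{poly}(d)) = O(\mathrm{poly}\log n)$ and depth $O(\log d) = O(\log\log n)$. Composing $O(\log n)$ rounds gives total size $O(n\log n)$ and depth $O(\log n)$, as required. The main obstacle I expect is not any single step in isolation but the simultaneous tuning of $d$ and $\eta$: one needs $d$ large enough that Ramanujan graphs exist with $\lambda/d$ small enough to (i) make the distance bound within the desired tolerance of $\epsilon^2$ and (ii) keep the Theorem~\ref{2theor} contraction constant below $1$ for every $\alpha<\epsilon^2/48$, while also (iii) keeping $d = O(\log n)$ so that brute-force inner decoding remains polylogarithmic and the GV construction runs in polynomial time. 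Verifying that a single choice of $d$ meets all three demands is the delicate part; everything else is packaging.
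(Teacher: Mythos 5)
You should note first that the paper does not prove this statement at all: Theorem~\ref{3theor} is quoted verbatim as background from Sipser and Spielman \cite{sipser}, so the only proof to compare against is the original one, and your outline (GV-bound inner codes, explicit Ramanujan base graphs, then Theorems~\ref{1theor} and~\ref{2theor} used as black boxes with $O(\log n)$ contraction rounds) is indeed the Sipser--Spielman strategy in its broad strokes.

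There is, however, a genuine flaw in your complexity accounting, and it comes from the choice $d=O(\log n)$. In the expander-code construction the subcode length $d$ is a \emph{constant} depending only on $\epsilon$ and the desired tolerances (the present paper even stresses this in its introduction); the family is obtained by growing the number of vertices of a fixed-degree Ramanujan graph while $\mathbb{S}$ stays fixed. Your growing-$d$ variant breaks the circuit bound: a brute-force decoder for a generic GV-bound linear code of length $d$ is not of size $\mathrm{poly}(d)$ --- enumerating codewords or hard-wiring a syndrome table costs $2^{\Theta(d)}$, and no $\mathrm{poly}(d)$-size decoder for arbitrary GV codes is known. With $d=\Theta(\log n)$ each constraint decoder therefore has size $n^{\Theta(1)}$, so a single parallel round already exceeds $O(n\log n)$ gates, and the claimed total of size $O(n\log n)$ and depth $O(\log n)$ does not follow. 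The repair is exactly the original argument: fix $d$ as a large constant (large enough that $\lambda/d\le 2/\sqrt{d}$ makes the Theorem~\ref{1theor} distance bound $\epsilon^2$-close and the Theorem~\ref{2theor} contraction factor $\tfrac23+\tfrac{16\alpha}{\epsilon^2}+\tfrac{4\lambda}{\epsilon d}$ a constant $\rho<1$ for the given $\alpha<\epsilon^2/48$), find $\mathbb{S}$ by exhaustive search in constant time, and implement each round with $\Theta(n/d)$ constant-size, constant-depth decoders, giving $O(n)$ gates per round, $O(\log n)$ rounds, hence size $O(n\log n)$ and depth $O(\log n)$. This also dissolves the ``delicate simultaneous tuning of $d$ and $\eta$'' you worry about at the end: since $d$ never grows with $n$, there is no tension between making $\lambda/d$ small and keeping the inner decoding cheap or the GV search polynomial.
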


From theorem \ref{1theor}, we observe that to have high minimum relative
distance, we should have \textit{$\epsilon$ as high}, and \textit{$\frac{\lambda}{d}$ as
low}. Since $\mathbb{Z}$ has been constructed out of \textbf{d}-regular
graph $\mathbb{G}$, low $\frac{\lambda}{d}$ signifies high distance between
first and second eigenvalues, i.e. the graph $\mathbb{G}$ has to be a
``good" expander graph. Further, to have high rate, $\mathbb{S}$ has to
have a \textit{high rate \textbf{r} as well}, other than having \textit{high minimum
relative distance $\epsilon$}.

From theorem \ref{2theor}, we observe that to shrink the distance of input
word after one iteration maximally, we need to again have \textit{$\epsilon$ as
high} as possible, and \textit{$\frac{\lambda}{d}$ as low} as possible. Such maximal
shrinking of distance, per iteration, leads to the fastest convergence
possible, as is also brought out in the proof of theorem \ref{3theor}.

From theorem \ref{3theor}, we observe that to be able to correct as high
fraction of errors as possible, we need to have \textit{$\epsilon$ as high} as
possible, again.

As an aside, $\frac{\lambda}{d}$ is low whenever \textbf{d} is high. In
PG-based bipartite graphs, \textbf{d} does increase, as \textbf{n}
increases (hence \textit{expander-like} codes), which hence helps in
making the overall code advantageous over classical expander codes.

\subsection{Good Expander Graphs}
Zemor pointed out \cite{zemor} that if $\mathbb{G}$ is a bipartite
graph, then the \% of random errors that can be corrected using a parallel
iterative decoding algorithm can be \textbf{increased twelve-fold}. He also
pointed out that the upper bound on minimum distance, as pointed
out in theorem \ref{1theor}, can also be achieved faster, if one considers
\textit{Ramanujan graphs} (since $\frac{\lambda}{d}$ value is low for these
graphs). Overall, he suggested using bipartite Ramanujan graphs for
construction of good expander codes instead.

\section{Details of Expander-like Code}
\label{our_code}

\subsection{PG Graphs as Good Expander Graphs}
Balanced regular bipartite graphs $G_{d,d}$, which are \textit{symmetric
balanced incomplete block designs(BIBDs)} are known to be Ramanujan
graphs \cite{janwa}. Incidence relations of projective geometry
structures give such BIBDs, and hence Ramanujan
graphs. Usage of projective plane as $G_{d,d}$ along with RS
codes as component codes to construct good expander-like graph codes was
first reported in \cite{hoholdt}. For our work, we \textbf{do not limit}
to projective planes: to have better performance, we have made use of
point-hyperplane incidence graphs from higher dimensional projective
spaces, which also satisfy the eigenvalue properties that make a Ramanujan
graph.  Some reasons for using projective geometry are as follows.
\begin{itemize}
\item As detailed in a companion paper \cite{dmaa_pap}, the mapping of
vertices to points and hyperplanes enables us to use several projective
geometry properties for \textit{disproving the existence of certain bipartite
subgraphs of a fixed minimum degree}. This strategy leads us to finding the
minimum number of vertices required to form a complete bipartite subgraph
of a given minimum degree. This number of vertices is required to
calculate tight \textbf{geometric bounds} for error correction capability
of the overall code. Thus, we don't need to use complicated eigenvalue
arguments used by \cite{sipser} and \cite{zemor}. Also, the bounds obtained
in this manner are better than our predecessors. Furthermore, Zemor had
restricted the subcodes to be constrained by $d\geq3\lambda$, $\lambda$
being the second largest eigenvalue of the graph. We have \textit{no such
restriction}.
\item The use of projective geometry also helps in developing a perfect
\textit{folded architecture} of the decoder for hardware implementation,
discussed later in section \ref{results_sec}. This particular folding
enables efficient utilization of processors and memories, while being
throughput-optimal.
\end{itemize}

\subsection{Reed-Solomon Codes as Good Component Codes}
By choosing a ``good expander" graph, and fixing a code with high
minimum relative distance $\epsilon$, one can design code having the first
two properties described in section \ref{good_code_sec}. To simultaneously
have high code rate for $\mathbb{C(Z,S)}$, the component code $\mathbb{S}$
also needs to have high rate \textbf{r}. 
\textit{Reed-Solomon} codes are a class of non-binary,
linear codes, which for a given rate, have the best minimum relative
distance(so-called \textit{maximum distance separable} codes), \textbf{and
vice-versa}. Hence we use RS codes as the sub-codes to our expander-like
codes.
\subsection{Code Construction}
\label{code_sec}
To construct an expander-like code, we follow \cite{zemor}. We generate a
\textbf{balanced regular bipartite graph} $\mathbb{G}$ from a projective space. A
projective space of dimension \textbf{n} over $\mathbb{GF}(2)$,
$\mathbb{P}(n,\mathbb{GF}(q))$, has at least following two properties,
arising out of inherent duality:
\begin{enumerate}
\item The number of subspaces of dimension $m$ is equal to the number of
subspaces of dimension ($n-m-1$).
\item The number of $m$-dimensional subspaces incident on each
($n-m-1$)-dimensional subspace is equal to the number of ($n-m-1$)-dimensional
subspaces incident on each $m$-dimensional subspace.
\end{enumerate}
We associate one vertex of the graph with each
$m$-dimensional subspace and one with each ($n-m-1$)-dimensional subspace.
Two vertices are connected by an edge if the corresponding subspaces are
incident on each other. As edges lie only between subspaces of
different dimensions, the graph is bipartite with vertices associated with
$m$-dimensional subspaces forming one set and vertices associated with
($n-m-1$)-dimensional subspaces forming another. Also, the two properties,
listed above, ensure that both the vertex sets have the same number of
elements and that each vertex has the same degree. To be able to quantify
various properties of the constructed code, we hereafter specifically
consider the graph, $\mathbb{G}=(V,E)$, obtained by taking the points and
hyperplanes of $\mathbb{P}(5,\mathbb{GF}(2))$. In this projective
space, the number of points (= number of hyperplanes) is $\phi(5,0,2)=63$.
Each point is incident on $\phi(4,3,2)=31$ hyperplanes and each hyperplane
has $\phi(4,0,2)=31$ points. Therefore, we have $|V|=126$ and $|E|=1953$.
This implies that the block length of code $\mathbb{C}$ is $1953$ and the
number of constraints in the code is $126$. The second eigenvalue of
$\mathbb{G}$, $\lambda$ is 4, according to a formulae by \cite{chee}. Hence
the ratio $\frac{\lambda}{d}$ is quite small, as required for design of
``good" expander codes.

As the expander graph $\mathbb{G}$ is $31$-regular, the block length of the
component code must also be $31$ \cite{sipser}. We choose the $31$-symbol
\textit{shortened Reed Solomon code} as the component code, with each
symbol consisting of eight bits. To
have performance advantage, we also \uline{modify Zemor's decoding
algorithm} \cite{zemor} as follows. If a particular vertex detects more
errors than it can correct, we skip the decoding for that vertex. In
Zemor's algorithm, the decoding is still carried out in such case, which
can lead to possibly getting a (different) codeword with more errors.
The modification is possible because it is possible to compute, as a
\textit{side output} using Berlekamp-Massey's algorithm for RS
decoding\cite{inf_th}, whether the degree of errors in the current input
block of symbols to the decoder be corrected or not. If not, then the
algorithm can be made to skip decoding, thus preserving the errors in the
input block. This \textit{variation in decoding} will reduce the number of
extra errors introduced by that vertex if the decoding fails.  Based on
this decoding algorithm, a MATLAB model of decoder was first made, to
observe code's performance as discussed next. 

\section{Performance of Code for Random Errors}
\label{rand_perf}

To benchmark the error-correction performance in wake on \textit{random
errors}, we varied the minimum distance of
the component code, and simulated the MATLAB decoder model. Random symbol
errors were introduced at random locations of the zero vector. Convergence
of the decoder's output back to the zero vector was checked after
simulation. As our code is linear, the performance obtained in testing for
zero vector is valid for the entire code. Since the errors were introduced
at random locations, simulations were run over \textit{many different
rounds} of decoding for different \textit{pseudo-random sequences} as
inputs, and \textit{averaged}, to get reliable results. These sequences differ
in random positions in which the errors are introduced. Each round of decoding
for particular input further involves several iterations of execution of
decoding algorithm. One iteration of decoding corresponds to both sides
of the bipartite graph to finish decoding the component codes.

It is \textit{observed experimentally} that in case of a decoding failure,
beyond 4 iterations, the number of errors in the codeword
stabilizes(referred to as \textit{fixed point} in \cite{barg-zemor}). In
the first few iterations, as the number of errors decrease in the overall
code in a particular iteration, the number of errors \textit{on average} to
be handled by RS decoders in next iteration is lesser. Hence
probabilistically, and experimentally, these component decoders converge
faster, thus \textit{increasing} the percentage of errors being corrected
in its next iteration. However, after maximum 4 iterations, it was seen
that there is no further reduction in errors. This phenomenon could most
probably be attributed to infinite oscillations of errors in an embedded
subgraph, to be described in next section.

Hence we have fixed the stopping threshold of decoder to \textit{exactly 4
iterations} not only in simulation, but also in the practical design of a
CD-ROM decoder. In simulation, if there are non-zero entries remaining
after 4 iterations, then the decoding is considered to have failed. In real
life, if one or more component RS decoders fail to converge at 4th
iteration, then again decoding is considered to have failed. The results of
our simulations are presented in Tables \ref{tab:1} and \ref{tab:2}. The
component codes used for these simulations have minimum distance as 5 and 7,
respectively. The ``failures" column represents the percentage of failed
decoding attempts. The ``average number of iterations" column signifies
the \textit{average} number of iterations required for successful decoding
of a corrupt codeword, over various rounds.

\begin{table}[h]
\centering
\begin{tabular}{|c|c|c|}
\hline
No. of errors & failures & Avg. no. of iterations\\ \hline
50 & 0 & 1 \\ \hline
80 & 1 & 1.71 \\ \hline
100 & 18 & 2.33 \\ \hline
110 & 40 & 2.72  \\ \hline
\end{tabular}
\caption{Random errors ($\epsilon=5$)}
\label{tab:1}
\end{table}

\begin{table}[h]
\centering
\begin{tabular}{|c|c|c|}
\hline
No. of errors & failures & Avg. no. of iterations\\ \hline
150 & 0 & 1.6   \\ \hline
175 & 0 & 1.99  \\ \hline
200 & 0 & 2.19  \\ \hline
250 & 23 & 3.82 \\ \hline
275 & 64 & 4.5  \\ \hline
\end{tabular}
\caption{Random errors ($\epsilon=7$)}
\label{tab:2}
\end{table}
We present some worst-case bounds on rate and error-correction capability
of our codes in Table \ref{tab:3}. We vary the minimum distance of subcode
between 3 and 15. Beyond 15, rate of the overall code $\mathbb{C}$ becomes
very less and hence impractical. We also compare these bounds to the
bounds derived by Zemor. For calculating the Zemor bounds and making a
fair comparison, we need to remove the advantage of using Reed Solomon
codes as sub-codes. Zemor had derived the bounds for general codes
assuming that $\geq$ $\frac{\epsilon}{2}$ errors could not be
corrected for \textbf{any} distance(even/odd). For Reed Solomon component
codes used in our construction, since we use only \textit{odd} distances, 
$\frac{(\epsilon+1)}{2}$ errors can never be corrected. To
account for this, we replace $\epsilon/2$ by $(\epsilon+1)/2$ in Zemor's
formula to calculate the bounds.
\begin{table}[h]
\centering
\begin{tabular}{|c|c|c|c|c|}
\hline
Min. dist. & Subcode &  Lower bound &  Error-correcting  &
Zemor's \\
subcode $(\epsilon)$ & rate & on rate of $\mathbb{C}$ & capability & bound \\
$(\epsilon)$ & & & of $\mathbb{C}$ & for $\mathbb{C}$ \\  \hline
3 & 0.94 & 0.87 & 3 & -- \\\hline
5 & 0.87 & 0.74 & 8 & -- \\ \hline
7 & 0.81 & 0.61 & 15 & -- \\ \hline
9 & 0.74 & 0.48 & 24 & -- \\ \hline
11 & 0.68 & 0.35 & 35 & -- \\ \hline
13 & 0.61 & 0.23 & 48 & 42 \\ \hline
15 & 0.55 & 0.1 & 87 & 65 \\ \hline
\end{tabular}
\caption{Change in parameters of $\mathbb{C}$ with variation in
         minimum distance of subcode}
\label{tab:3}
\end{table}

We give a geometrical analysis of process of error correction in the
overall code $\mathbb{C}$. We have also used results from this analysis 
to derive the bounds on error correction capability
of $\mathbb{C}$. First of all, since we have $\mathbb{P}(5,\mathbb{GF}(2))$,
points form the 0-dimensional subspace and hyperplanes form the
4-dimensional subspace. Moreover, planes form 2-dimensional subspaces of the
projective space, and are symmetric with respect to points and hyperplanes.
Finally, 7 points are contained in a plane and a plane is contained in 7
hyperplanes in $\mathbb{P}(5,\mathbb{GF}(2))$.

To understand the limits, given the minimum distance $\epsilon$ of the
subcode, we need to find the minimum number of random errors to be
introduced in $\mathbb{C}$, which will cause the decoding to fail. This
will happen if the vertices corresponding to the points and hyperplanes
\textbf{get locked} in such a way that in each iteration, an \textbf{equal
number} of constraints fail on each side. This is the
\textit{\textbf{minimal} configuration of failure}. As explained next,
errors can expand over iterations(more edges represent corrupt symbols),
but that is not the minimum configuration of failure. Similarly, if errors
shrink, i.e. lesser number of vertices in bipartite graph fail in next
iteration, then it leads to a case of decoding convergence, not decoding
failure.


For example, if we consider $\epsilon=5$, each vertex of the graph can
correct up to 2 erroneous symbols($\lfloor\frac{\epsilon}{2}\rfloor$) in the
set of symbols that it is decoding.  If 3 or more erroneous symbols are
given to it, then either the decoder, based on Berlekamp-Massey's algorithm,
skips decoding, or it outputs another codeword that in worst case has at
least $\epsilon$ different symbols now(than the transmitted codeword), and
hence at least $\epsilon$ errors. However, as discussed earlier, we are not
interested in the latter case. So, if we can generate a case in which
decoding of the sub-code fails at vertices corresponding to 3 points, all
of which are incident on 3 hyperplanes, we have a situation in which the 3
points will transfer at least 3 errors to each of vertices corresponding to
the 3 hyperplanes. These vertices may also fail, or decode a different
codeword, while decoding their inputs. Again in the worst case each of
these hyperplane decoders will output at least 3 erroneous symbols. These
corrupt symbols, or errors, are then transferred back to the vertices
representing the 3 points. Thus, the errors will keep \textbf{oscillating
infinitely} from one side of the graph to the other, and the decoder will
never decode the right codeword. Thus, a minimum of $3*3=9$ errors are
required to cause a failure of decoding. This can be seen from the figure
\ref{9_fail}.
\begin{figure}[!h]
\centering
\includegraphics[scale=0.3]{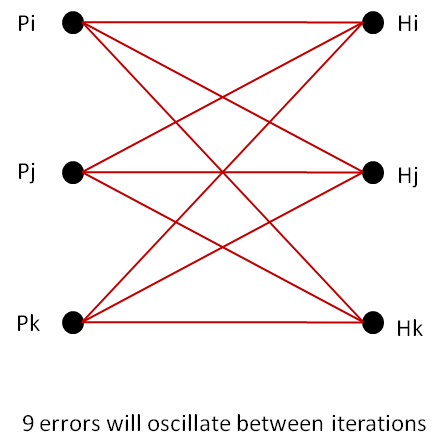}
\caption{Subgraph that causes failure ($\epsilon=5$)}
\label{9_fail}
\end{figure}
For any case in which less than 9 corrupt symbols exist, by pigeonhole
principle, we will have \textit{at least one} hyperplane or point having
less than 3 errors incident on it. Decoder corresponding to that vertex
will correctly decode the sub-code, thus reducing the total number of
errors flowing in the overall decoder system of $\mathbb{C}$. This will, in
next iteration, cause some other hyperplane or point to have less than 3
errors. Thus in the subsequent iterations, all the errors will definitely
be removed. Therefore, \textit{8 errors or less will always be corrected}.
This has been illustrated in figure \ref{8_success}.
\begin{figure}[!h]
\centering
\includegraphics[scale=0.2]{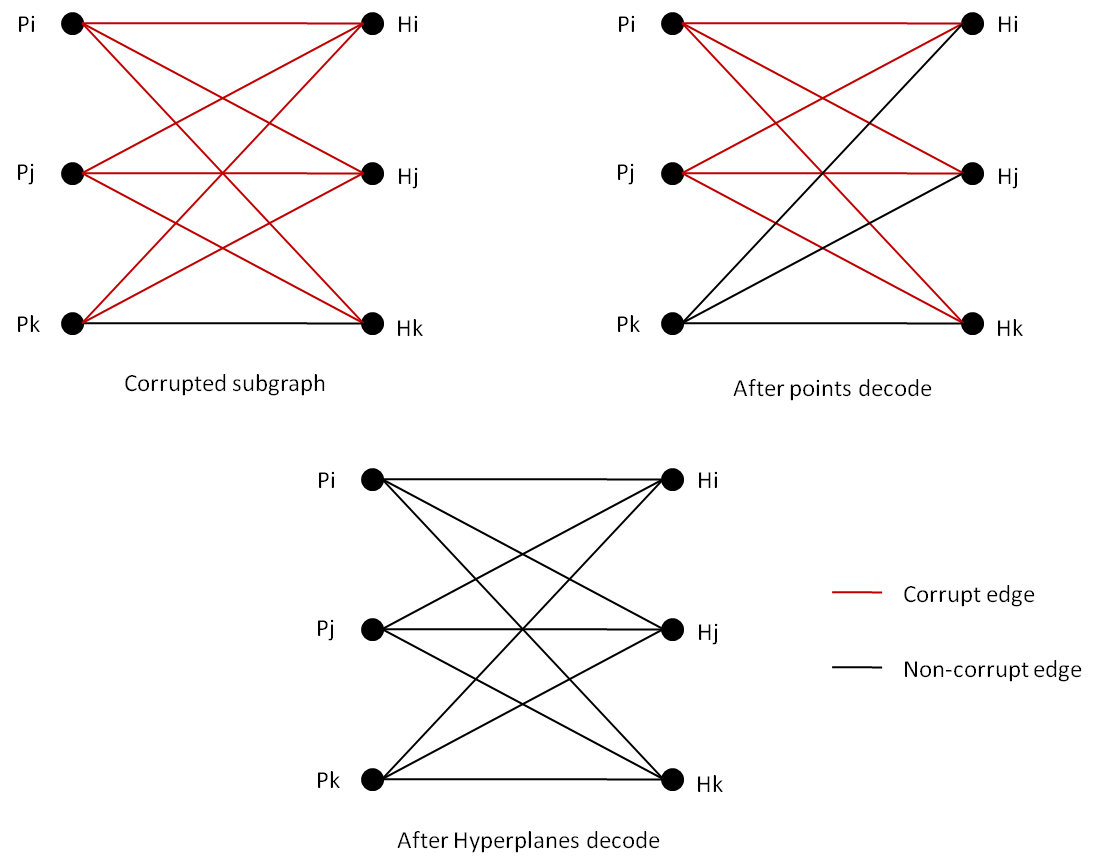}
\caption{8 or less errors always corrected}
\label{8_success}
\end{figure}
As we can see from Table ~\ref{tab:1}, the worst case scenario is very
unlikely to occur and for randomly placed errors, even 80 errors are found
to be corrected 99\% of the time.

Now the question is, when can we find a configuration in which 3 points
are all incident on 3 hyperplanes? If we choose any plane in the given
geometry, we can pick up any three points of that plane and find any 3
hyperplanes corresponding to the same plane. This will ensure that all the
3 points are incident on all the 3 hyperplanes. Thus, if for some input,
the decoding at these 3 points fails, in the worst case they will corrupt
all the edges incident on them. This in turn would cause 3 errors each, on
the chosen hyperplanes. Hence the errors would oscillate between points and
hyperplanes for each successive iteration. Thus, in the worst case, there
need to be 9 erroneous symbols, located such that they are incident on the
3 chosen points, to cause the decoding of $\mathbb{C}$ to fail.

In general, if we are given a minimum distance $\epsilon$ of the subcode,
it is known that at each vertex, more than ($\frac{\epsilon+1}{2}$) errors
will not be corrected. So, in our graph we need to find the \textit{minimum}
number of vertices $\xi$ required to get a \textbf{embedded bipartite
subgraph} such that each vertex in the subgraph has a degree of at least
($\frac{\epsilon+1}{2}$) towards vertices on other side of the subgraph.
Once this number of vertices in some embedded subgraph has been found, the
number of errors that can always be corrected by our decoder is given by:
\begin{equation*}
E = \xi(\frac{\epsilon+1}{2}) - 1
\end{equation*}
In $\mathbb{P}(5,\mathbb{GF}(2))$, a plane has 7 points, and is contained
in 7 hyperplanes. For $3 \leq \epsilon \leq 13$, $\epsilon$ being odd, the
minimum number of vertices $\xi$ corresponds to ($\frac{\epsilon+1}{2}$),
and the corresponding points and hyperplanes can be picked from any plane.
For $\epsilon\geq15$, the calculation of $\xi$ is non-trivial, since points
and hyperplanes from one plane are not sufficient. This is because
$\frac{\epsilon+1}{2}=8$, which would require us to get a subgraph
of minimum degree 8. Construction of such embedded subgraph is  not
possible by choosing only one plane.
In next few sections, we give detailed proof to show non-existence of a
minimum degree 8 embedded bipartite subgraph having 9, 10 vertices on each side,
within point-hyperplane graph of $\mathbb{P}(5,\mathbb{GF}(2))$. Assuming that a sub-graph with 11 vertices
exists, having a vertex degree of at least 8 (we have not been able to
disprove it yet), we get the bound stated in table \ref{tab:3} for
$\epsilon=15$. Since our constructions are exact, we can use these
\textit{tight} lower bounds for all practical values of $\epsilon$,
wherever calculation of it is possible. Otherwise, another \textit{looser},
lower bound can be found using Lemma 1 of \cite{hoholdt}, but derivation of
that uses eigenvalue calculations. Without that, it is still clear from
this table that the bound is \textbf{much better} than the bound obtained
by Zemor\cite{zemor} using eigenvalue arguments.

\subsection{Bound in case of $\epsilon=15$}
\label{props}

Earlier in this section, we highlighted the method to analyze the error
correction bounds for our code. We also derived the
bounds for $3 \leq \epsilon \leq 13$. For the complex case of
$\epsilon=15$, we present in next few sections, overview of proofs for
establishing the non-existence of certain sub-graph embeddings which help
us improve the bound, as reflected in table \ref{tab:3}. The
detailed proofs are presented in Appendix \ref{proof_app}. Appendix \ref{eigen} presents
a eigenvalue based approach (similar to Zemor's arguments).

We first describe the two
theorems, that prove the non-existence of certain minimal embeddings.

\begin{theor}
\label{theor1}
In the construction of bipartite graph mentioned above, there exists no
embedded subgraph having size of partitions as 9, the degree of each of
whose vertices has a minimum degree($\delta$) of 8.
\end{theor}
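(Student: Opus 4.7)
The plan is to argue by contradiction and use the projective span of the nine chosen points to split into cases. Write $U=\{p_1,\ldots,p_9\}$ for the point-side partition and $W=\{H_1,\ldots,H_9\}$ for the hyperplane-side partition, and let $T$ be the projective span of $U$ in $\mathbb{P}(5,\mathbb{GF}(2))$. Since each side of the subgraph contributes total degree at least $72$, at most $9$ of the $81$ possible pairs $(p_j,H_i)$ are non-incident, and since each vertex is missing at most one edge these non-incidences form a partial matching between $U$ and $W$. Because a plane holds only $7$ points and a line only $3$, we must have $\dim T\in\{3,4,5\}$, and by the Gaussian-coefficient formula stated in Section~\ref{pg_intro} the number of ambient hyperplanes containing $T$ is $3$, $1$, and $0$ respectively.

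If $\dim T=3$, any hyperplane $H\not\supseteq T$ meets $T$ in a plane with only $7$ points, so (since $U\subseteq T$) such an $H$ intersects $U$ in at most $7$ points, violating the minimum-degree hypothesis. Thus every $H\in W$ would have to contain $T$, but only $3$ hyperplanes do; contradiction. The remaining two cases I would dispatch by a single matroid argument. When $\dim T\in\{4,5\}$, at most one ambient hyperplane contains $T$, so at least $8$ (resp.\ $9$) hyperplanes of $W$ must each miss a point of $U$; by the matching property those missed points are distinct, and after relabelling I can take $H_i$ to miss precisely $p_i$ for $i=1,\ldots,8$ (resp.\ $i=1,\ldots,9$). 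Writing each $H_i$ as the kernel of a nonzero linear functional $\ell_i$ on $\mathbb{F}_2^6$, the conditions $\ell_i(p_j)=0$ for $j\neq i$ and $\ell_i(p_i)\neq 0$ give $p_i\notin\mathrm{span}\{p_j:j\neq i\}$, i.e.\ $p_i$ is a coloop of the matroid on $U$. The rank of this matroid equals the vector dimension of $T$, namely $5$ or $6$, but a rank-$r$ matroid has at most $r$ coloops; since $8>5$ and $9>6$, both subcases fail.

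The main obstacle I anticipate is the $\dim T=4$ case: a direct combinatorial attempt at cataloguing nine points in a $4$-flat with eight-out-of-nine hyperplane covers is long and error-prone. The saving move is to observe that once the non-incidences are forced into a matching, they translate into a coloop condition, which collapses both $\dim T=4$ and $\dim T=5$ into the same one-line rank inequality. The $\dim T=3$ case, by contrast, is essentially a two-line consequence of the fact that a plane has $7<8$ points.
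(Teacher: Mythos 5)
Your argument is correct, but it is a genuinely different proof from the one in the paper. You exploit the fact that with partitions of size $9$ and $\delta\geq 8$ each vertex has at most one non-neighbour, so the non-incidences form a partial matching, and you then case-split on the projective span $T$ of the nine points: if $\dim T=3$ every hyperplane of the subgraph would have to contain $T$ (a hyperplane not containing $T$ meets it in a plane of only $7$ points, too few), yet only $3$ ambient hyperplanes contain a $3$-flat; if $\dim T\in\{4,5\}$ at most one hyperplane of $W$ contains $T$, so at least $8$ (resp.\ $9$) hyperplanes each miss exactly one point of $U$, these missed points are distinct by the matching property, and writing each such hyperplane as the kernel of a linear functional shows each missed point lies outside the span of the other eight, i.e.\ is a coloop; since the coloops are linearly independent, their number is bounded by the vector rank of $T$ ($5$ or $6$), contradicting $8$ or $9$. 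All the counts you invoke (3, 1, 0 hyperplanes through a $3$-, $4$-, $5$-flat; a plane holding $7$ points) are right, and the coloop-versus-rank bound is sound. The paper instead proceeds through its plane-centric lemmas: Lemma~\ref{9_8} forces any three points to share six hyperplanes, Lemma~\ref{4ind} extracts three non-collinear points, this pins down a $6$-by-$6$ subgraph supported on a single plane's point/hyperplane ``hourglass,'' and Lemma~\ref{3overlap} then caps the achievable degree of the three leftover vertices at $5$ or $6$, below $8$. Your route is shorter, avoids the auxiliary lemmas entirely, and handles the $\dim T=4$ case uniformly with $\dim T=5$ rather than by combinatorial cataloguing; what it does not buy is reusable machinery, since the paper's plane-based lemmas are precisely the tools it needs again for the harder $10$-by-$10$ case (Theorem~\ref{theor2}), where the degree deficiency per vertex can be $2$ and your matching/coloop trick no longer applies directly.
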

\begin{theor}
\label{theor2}
In the construction of bipartite graph mentioned above, there exists no
embedded subgraph having size of partitions as 10, the degree of each of
whose vertices has a minimum degree($\delta$) of 8.
\end{theor}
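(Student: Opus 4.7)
The plan is to proceed by contradiction: suppose such an embedded subgraph $B$ exists with points $P=\{P_1,\dots,P_{10}\}$ and hyperplanes $H=\{H_1,\dots,H_{10}\}$, every vertex of $B$ having degree at least $8$. Equivalently, each $P_i$ is missed by at most two hyperplanes of $H$ and each $H_j$ misses at most two points of $P$, so the non-incidence bipartite graph $\Gamma$ between $P$ and $H$ has maximum degree $2$ on each side (hence at most $20$ edges), and the total number of incidences in $B$ lies between $80$ and $100$.

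I would then case-split on the projective dimension of the span $S=\mathrm{span}(P_1,\dots,P_{10})$ inside $\mathbb{P}(5,\mathbb{GF}(2))$; since any projective subspace of dimension at most $2$ contains at most $7$ points, $\dim S\geq 3$. If $\dim S=3$, any $H_j\not\supseteq S$ meets $S$ in a plane of $7$ points, giving $|P\cap H_j|\leq 7<8$; hence every $H_j$ must contain $S$, contradicting that only $\phi(1,0,2)=3$ hyperplanes of $\mathbb{P}(5,\mathbb{GF}(2))$ contain a given $3$-dim subspace. If $\dim S=4$, so $S$ is a hyperplane of $\mathbb{P}(5,\mathbb{GF}(2))$, each $H_j\neq S$ meets $S$ in a $3$-dim subspace $T_j=H_j\cap S$; the condition $|P\cap T_j|\geq 8$ together with the fact that eight points cannot fit into a $7$-point plane forces $P\cap T_j$ to span $T_j$, and each $3$-dim subspace of $S$ arises as $H\cap S$ for exactly two hyperplanes $H\neq S$ of $\mathbb{P}(5,\mathbb{GF}(2))$. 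The problem thus reduces to asking whether $\mathbb{P}(4,\mathbb{GF}(2))\cong S$ admits ten points together with at least $\lceil 9/2\rceil=5$ distinct $3$-dim subspaces, each containing at least eight of the ten points and with each point lying in at least four of the distinguished subspaces. I would dispose of this sub-problem by combining the double-counting identity $\sum_T\binom{|P\cap T|}{2}=\binom{10}{2}\cdot\phi(2,1,2)=315$ (summed over all $31$ three-dim subspaces of $\mathbb{P}(4,\mathbb{GF}(2))$, using that every line lies in $7$ such subspaces) with the lower bound $5\cdot\binom{8}{2}=140$ from the five distinguished subspaces, and by observing that the ten points of $P$ must themselves span $S$ inside $\mathbb{P}(4,\mathbb{GF}(2))$ (otherwise a copy of the $\dim S=3$ argument applies inside $S$), to derive a contradiction via a dimension argument in $\mathbb{P}(4,\mathbb{GF}(2))$ paralleling that used in Theorem \ref{theor1}.

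If $\dim S=5$, so the ten points span $\mathbb{P}(5,\mathbb{GF}(2))$, I would combine inclusion--exclusion on hyperplane intersections --- giving $|P\cap H_j\cap H_k|\geq 6$ and $|P\cap H_j\cap H_k\cap H_l|\geq 4$ for any two or three hyperplanes of $H$, and hence forcing the $3$-dim and $2$-dim intersections of hyperplanes in $H$ to carry many points of $P$ --- with a direct enumeration of the possible shapes of $\Gamma$ as a disjoint union of paths and even cycles, ruling each shape out by checking that its prescribed pattern of missed incidences forces some subset of points of $P$ to lie in a subspace whose number of $\mathbb{P}(5,\mathbb{GF}(2))$-hyperplane containers is too small to accommodate $|H|=10$. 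This last case is the main obstacle: unlike Theorem \ref{theor1} (nine vertices), where the degree-$8$ condition forces row and column sums of the non-incidence matrix to be at most $1$ and $\Gamma$ is a partial matching, in the ten-vertex setting both paths of various lengths and short even cycles can appear in $\Gamma$, so several candidate shapes survive the initial counting and each must be eliminated by a bespoke projective-geometric argument; duality between points and hyperplanes in $\mathbb{P}(5,\mathbb{GF}(2))$ roughly halves the enumeration, with the detailed case-by-case verification deferred to Appendix \ref{proof_app}.
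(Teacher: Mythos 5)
Your framework (contradiction, observing that the non-incidence graph has maximum degree $2$, and a case split on $\dim\,\mathrm{span}(P)$) is sensible, and the $\dim S=3$ case is correct and complete: only $\phi(1,0,2)=3$ hyperplanes of $\mathbb{P}(5,\mathbb{GF}(2))$ contain a fixed $3$-flat, so ten hyperplanes cannot all contain $S$. But the two substantive cases are not actually proved. In the $\dim S=4$ case your reduction is fine (at most one member of $H$ equals $S$, each solid $T\subset S$ is the trace of exactly two other hyperplanes, hence at least five distinct solids of $S\cong\mathbb{P}(4,\mathbb{GF}(2))$ each containing at least eight of the ten points, each point on at least four of them), yet the counting you offer to kill this configuration does not close: the identity $\sum_{T}\binom{|P\cap T|}{2}=\binom{10}{2}\cdot 7=315$ over all $31$ solids is an equality, and the contribution $5\binom{8}{2}=140$ of the distinguished solids sits comfortably below it, so no contradiction follows; the ``dimension argument paralleling Theorem \ref{theor1}'' that is supposed to finish the case is never exhibited. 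Worse, the $\dim S=5$ case --- which you yourself identify as the main obstacle, and which is the generic situation for ten points --- is only a plan: you propose enumerating the shapes of the non-incidence graph (paths and even cycles) and eliminating each ``by a bespoke projective-geometric argument,'' with the verification deferred. In a blind proof there is nothing to defer to, so the heart of the theorem is missing.

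For comparison, the paper closes exactly these hard configurations by purely incidence-theoretic lemmas rather than by span dimension: Lemma \ref{10_8} (any three vertices of the putative $10$-a-side subgraph share at least four neighbours, the analogue of your inclusion--exclusion bounds) forces an intervening plane carrying a $4$-a-side biclique, and Lemma \ref{3overlap} (a point off a plane lies on at most $3$ of the $7$ hyperplanes through that plane) caps the degree any vertex can draw from a single plane at $7$; the proof then runs a case analysis on how many points ($7$, $6$, $5$ or $4$) come from one plane, using Fano-plane line counting and Lemmas \ref{ptoverlap} and \ref{lem5} to show every arrangement leaves some vertex with degree at most $7$. Some quantitative device of this kind --- bounding how much degree a second or third plane (or, in your language, a second solid or a particular non-incidence pattern) can contribute --- is precisely what your outline still needs in both the $\dim S=4$ and $\dim S=5$ cases.
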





\section{Vector Space Representation of Geometry}
In this section, we outline certain details used in the proofs of
the above theorems, in Appendix \ref{proof_app}.
To recall, the points of an $n$-dimensional projective space over a field
$\mathbb{F}$ can be taken to be the equivalence classes of nonzero vectors
in the (n + 1)-dimensional vector space over $\mathbb{F}$. Vectors in an
equivalence class are all scalar multiples of one-another. These vector
being one-dimensional subspaces, they also represent the rays of a vector
space passing through origin. The orthogonal subspace of each such ray
is the {\bf unique} n-dimensional subspace of $\mathbb{F}^{n + 1}$, known as
hyperplanes. Each vector $h$ of such orthogonal subspace is linked to the
ray, $p$, by a dot product as follows.
\begin{equation*}
    p_{0}h_{0} + p_{1}h_{1} + \cdots + p_{n}h_{n} = 0
\end{equation*}
where $p_i$ is the $i^{th}$ coordinate of $p$. This uniqueness implies
bijection, and hence a vector $p$ can be used to represent a hyperplane
subspace, which is exclusive of this vector as a point. Due to duality,
similar thing can be said about a hyperplane subspace.

Hereafter, whenever we say that two projective subspaces of same dimension
are independent, we mean the linear independence of the corresponding
vector subspaces in the overall vector space.

\subsection{Relationship between Projective Subspaces}
Throughout the remaining paper, we will be trying to relate projective
subspaces of various types. We define the following \textbf{terms} for
relating projective subspaces.
\begin{description}
\item[\textbf{[Contained in]}] ~\\If a projective subspace \textbf{X} is said to be
        contained in another projective subspace \textbf{Y}, then the
        vector subspace corresponding to \textbf{X} is a vector subspace
        itself, of the vector subspace corresponding to \textbf{Y}.
        In terms of projective
        spaces, the points that are part of \textbf{X}, are also part of
        \textbf{Y}. The \uline{inverse relationship} is termed
        `\textbf{contains}', e.g. ``\textbf{Y} contains \textbf{X}''.
\item[\textbf{[Reachable from]}] ~\\If a projective subspace \textbf{X} is said to be
        reachable from another projective subspace \textbf{Y}, then
        \textit{there exists} a chain(path) in the corresponding lattice
        diagram of the projective space, such that both the flats,
        \textbf{X} and \textbf{Y} lie on that particular chain. There is no
        directional sense in this relationship.
\end{description}
\section{Lemmas Used in Proof}
In this section, we describe certain useful lemmas, related to proofs of
theorems \ref{theor1} and \ref{theor2}. These short lemmas and their proofs
provide an insight into the way detailed proofs were built up for theorems
\ref{theor1} and \ref{theor2}, in appendix \ref{proof_app}.

\subsection{Lemmas Related to Projective Space}
\begin{lem}
\label{4ind}
In projective spaces over $\mathbb{GF}$(2), any subset of
points(hyperplanes) having cardinality of 4 or more has 3
non-collinear(independent) points(hyperplanes).
\end{lem}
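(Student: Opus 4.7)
The plan is to translate the geometric statement into linear algebra over $\mathbb{GF}(2)$ and then exploit the extreme sparsity of that field. Three distinct projective points are collinear exactly when the three representative vectors span a $2$-dimensional subspace, i.e.\ are linearly dependent; dually, three hyperplanes are ``collinear'' in the required sense exactly when their normal vectors are dependent. So it suffices to prove the following: any set $S$ of $\ge 4$ distinct projective points in $\mathbb{P}(n,\mathbb{GF}(2))$ contains three linearly independent representatives. By the duality described in the excerpt (the bijection between points and hyperplanes), the hyperplane case follows from the point case.

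The key finite-field observation is that a projective line over $\mathbb{GF}(2)$ has exactly $\phi(1,0,2) = 3$ points. Equivalently, if $v_1,v_2$ are independent nonzero vectors, the nonzero vectors in their span are precisely $v_1,\, v_2,\, v_1+v_2$, since $1$ is the only nonzero scalar. Thus for three distinct points $p_1,p_2,p_3$, linear dependence forces $p_3 = p_1 + p_2$ (any relation with a coefficient equal to $0$ would collapse two of the points).

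First I would pick any three points $p_1,p_2,p_3 \in S$ and argue by contradiction: assume every triple drawn from $S$ is collinear. By the previous paragraph, $p_3 = p_1+p_2$, so the unique line through $p_1$ and $p_2$ is $\{p_1,p_2,p_3\}$. Next I would take any fourth point $p_4 \in S$; applying the assumption to the triple $\{p_1,p_2,p_4\}$ forces $p_4$ to lie on that same line, hence $p_4 \in \{p_1,p_2,p_3\}$, contradicting the distinctness of the four points. This contradiction proves the statement for points, and duality closes the case for hyperplanes.

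There is no real obstacle here; the proof is driven entirely by the fact that over $\mathbb{GF}(2)$ a line carries only three points, which makes any attempt to place a fourth point collinearly with two others collapse onto a point already chosen. The only thing to be careful about is explicitly ruling out the degenerate dependence relations (coefficients equal to $0$) when converting linear dependence of three projective points into the equation $p_3 = p_1 + p_2$, which is immediate since the points are required to be distinct as projective classes.
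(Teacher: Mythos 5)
Your proof is correct and rests on exactly the same observation as the paper's: over $\mathbb{GF}(2)$ the span of two independent representatives contains only the three nonzero vectors $p_1,p_2,p_1+p_2$, so a fourth distinct point cannot be collinear with the first two, and duality transfers the statement to hyperplanes. The only difference is cosmetic — you frame it as a contradiction over all triples, while the paper directly exhibits a point outside the $2$-dimensional span of two chosen points.
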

\begin{proof}
The underlying vector space is constructed over GF(2). Hence, any
2-dimensional subspace of contains the zero vector, and non-zero vectors
of the form $ \alpha a+ \beta b$. Here, $a$ and $b$ are linearly independent
one-dimensional non-zero vectors, and $\alpha$ and $\beta$ can be either 0 or
1:\\
$\mbox{ }$
\(\alpha,\beta\;\in\;\mathbb{GF}(2):(\alpha\;=\;\beta)\;\neq\;0\).\\
Thus, any such 2-d subspace contains exactly 3 non-zero vectors. Therefore, in
any subset of 4 or more points of a projective space over $\mathbb{GF}(2)$
(which represent one-dimensional non-zero vectors in the corresponding vector
space), at least one point is not contained in the 2-d subspace formed by 2
randomly picked points from the subset. Thus in such subset, a further subset
of 3 independent points(hyperplanes) i.e. 3 non-collinear vectors can always
be found.
\end{proof}

\begin{lem}
\label{3overlap}
Let there be 7 hyperplanes $H1, \cdots, H7$ reachable from a plane $P1$ in
$\mathbb{PG}(5, \mathbb{GF}(2))$. Let there be any other plane $P2$, which
may or may not intersect with the point set of plane $P1$. Then, any point on
$P2$ which is not reachable from plane $P1$, can \underline{maximally}
be reachable from 3 of these 7 hyperplanes, and vice-versa. Further, these 3
hyperplanes cannot be independent. Dually, any
hyperplane containing $P2$, and not containing $P1$, can \underline{maximally}
be reachable from 3 of the 7 points contained in $P1$ and which are not
independent, and vice-versa.
\end{lem}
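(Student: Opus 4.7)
The plan is to reduce the whole lemma to two elementary ingredients already in hand: the Gaussian-coefficient counting formula from Section~\ref{pg_intro}, and the structural smallness of $\mathbb{GF}(2)$, namely that a 2-dimensional $\mathbb{GF}(2)$-vector space has exactly three non-zero vectors $a$, $b$, $a+b$, which are necessarily linearly dependent (since they sum to zero). Everything else is dimension bookkeeping plus projective duality.

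For the point-side half, I would fix a point $p$ on $P_2$ that is not reachable from $P_1$ and form the projective flat $S = \langle P_1, p\rangle$. Because $p \notin P_1$, the corresponding vector span is 4-dimensional, so $S$ is a 3-dimensional projective subspace. Any hyperplane reachable from both $P_1$ and $p$ is precisely a hyperplane through $S$; plugging into the Gaussian-coefficient formula yields $\phi(5-3-1,\,4-3-1,\,2) = \phi(1,0,2) = 3$ such hyperplanes, which gives the ``at most $3$'' bound. For the non-independence of these three, I would dualise: hyperplanes through $S$ biject with 1-dimensional subspaces of the 2-dimensional annihilator $S^{\perp} \subseteq \mathbb{F}^{6}$, and the three dual vectors representing them must exhaust the non-zero vectors of $S^{\perp}$, which sum to zero.

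For the hyperplane-side half, I would pick $H \supseteq P_2$ with $P_1 \not\subseteq H$. Grassmann's identity in the ambient $\mathbb{F}^{6}$ forces $\dim(H \cap P_1) = 5 + 3 - 6 = 2$, so $H \cap P_1$ is a projective line inside $P_1$, carrying exactly $\phi(1,0,2) = 3$ points. Those three points are collinear, and hence not independent in the sense of Section~\ref{pg_intro} (a 2-dimensional vector subspace over $\mathbb{GF}(2)$ has only three non-zero vectors, dependent by the same argument as before). The ``vice-versa'' clauses in both halves are essentially free, because the hypothesis on $P_2$ is completely symmetric in $P_1$ and $P_2$, so the two planes may be swapped wholesale.

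I do not expect any genuine obstacle; the only subtlety is to keep the duality bookkeeping consistent across the two halves, namely that hyperplanes of $\mathbb{P}(5,\mathbb{GF}(2))$ dually correspond to 1-dimensional subspaces of $\mathbb{F}^{6}$, and that ``reachable from'' between a point and a hyperplane is simply the incidence relation. Once this is pinned down, both halves collapse to the same three-element $\mathbb{GF}(2)$ arithmetic that forces linear dependence, and the counting in each case is a one-line Gaussian-coefficient evaluation.
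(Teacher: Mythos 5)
Your proof is correct, but it follows a genuinely different route from the paper's. The paper argues by contradiction through Lemma~\ref{4ind}: if a point of $P2$ off $P1$ lay on $4$ or more of the $7$ hyperplanes, one could extract $3$ independent ones, and incidence with those forces incidence with all linear combinations, hence with all $7$ hyperplanes and so with $P1$ itself, a contradiction; the non-independence of the $3$ hyperplanes and the dual statements are then obtained by rerunning the same argument with roles interchanged. You instead count directly: forming the $3$-flat $S=\langle P1,p\rangle$ and evaluating $\phi(1,0,2)=3$ shows the point lies on \emph{exactly} $3$ of the $7$ hyperplanes (namely those through $S$, i.e.\ the non-zero vectors of the $2$-dimensional annihilator $S^{\perp}$, which are automatically dependent), and on the dual side the modular/Grassmann identity together with $P1\not\subseteq H$ forces $H\cap P1$ to be a line, giving exactly $3$ collinear points. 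Your version buys exact counts rather than mere upper bounds (in particular it sharpens the paper's closing remark that attaining $3$ is ``possible'' into the statement that it always happens), and it localizes the dependence claim in a one-line annihilator computation; the paper's version buys uniformity with its other lemmas, reusing Lemma~\ref{4ind} as the single combinatorial engine, which is why it is phrased as a contradiction rather than a count. The only compressed step on your side is the Grassmann computation, where you should state explicitly that $P1\not\subseteq H$ gives $\dim(H+P1)=6$ before concluding $\dim(H\cap P1)=2$; with that made explicit, both halves and the swap-of-roles ``vice-versa'' clauses are complete.
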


\begin{proof}
If a point on plane $P2$ which is not reachable from plane $P1$ lies on 4
or more hyperplanes(out of 7) reachable from plane $P1$, then by lemma
\ref{4ind}, we can always find a subset of 3 independent hyperplanes in
this set of 4. In which case, the point will also be reachable
from linear
combination of these 3 independent hyperplanes, and hence to all the 7
hyperplanes which lie on plane 1. This contradicts the assumption that
the point under consideration does not lie on plane $P1$. The role of
planes $P1$ and $P2$ can be interchanged, as well as roles of points and
hyperplanes, to prove the remaining alternate propositions.
~
Hence if the point considered above lies on 3 hyperplanes reachable from
$P1$, then these 3 hyperplanes cannot be independent, following the same
argument as above. Otherwise it is indeed possible for such a point to lie
on 3 hyperplanes, e.g. in the case of the planes $P1$ and $P2$ being
disjoint.
\end{proof}

\begin{lem}
Let there be 7 hyperplanes $H1, \cdots, H7$ reachable from a plane $P1$ in
$\mathbb{PG}(5, \mathbb{GF}(2))$. Further, let there be any other plane
$P2$, which intersects $P1$ in a line. Then, there exist 4 hyperplanes
reachable from plane $P1$ which do not contain any of the 4 points that are
in plane $P2$, but not in plane $P1$.
\end{lem}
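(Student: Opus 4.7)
The plan is to reduce the counting to a single dimension computation inside the projective lattice, so the whole argument fits on a page without any case analysis. Write $L=P_1\cap P_2$ (a line, hence $3$ points over $\mathbb{GF}(2)$), so that $P_2\setminus P_1$ consists of exactly $7-3=4$ points, which I will call $q_1,q_2,q_3,q_4$. The goal is to show that at least $4$ of the hyperplanes $H_1,\dots,H_7$ contain none of the $q_i$.

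The key observation to establish first is a ``jump'' property: if a hyperplane $H$ contains $P_1$ and any single point $q\in P_2\setminus P_1$, then $H$ must contain the whole plane $P_2$. The reason is that $H\supseteq P_1\supseteq L$, and by hypothesis $q\in H$, so $H$ contains the span $\langle L,q\rangle$; but $L$ is a line and $q$ is a point of $P_2$ off that line, so $\langle L,q\rangle$ is a $2$-dimensional projective subspace contained in $P_2$, forcing $\langle L,q\rangle=P_2$. Thus for each $H_i$ there is a dichotomy: either $H_i\supseteq P_2$ (in which case $H_i$ contains all four $q_j$), or $H_i\cap(P_2\setminus P_1)=\emptyset$ (in which case $H_i$ avoids all four $q_j$).

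Given this dichotomy, the only remaining task is to count how many of the $H_i$ lie in the first class. This is handled by a Gaussian coefficient applied to the join $S:=\langle P_1,P_2\rangle$. Using vector-space dimensions, $\dim(P_1)+\dim(P_2)-\dim(L)=3+3-2=4$, so $S$ is a $3$-dimensional projective subspace of $\mathbb{PG}(5,\mathbb{GF}(2))$. A hyperplane contains both $P_1$ and $P_2$ iff it contains $S$, and the number of hyperplanes (i.e.\ $4$-dim projective subspaces of the ambient $5$-dim space) through $S$ is
\begin{equation*}
\phi(5-3-1,\,4-3-1,\,2)\;=\;\phi(1,0,2)\;=\;\frac{2^{2}-1}{2-1}\;=\;3.
\end{equation*}
So exactly $3$ of the $7$ hyperplanes $H_1,\dots,H_7$ contain $P_2$, leaving $7-3=4$ hyperplanes in the second class of the dichotomy, each of which avoids every point of $P_2\setminus P_1$. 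That is precisely the statement.

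The only nontrivial step is the jump property, and even there the work is just recognizing that in $\mathbb{GF}(2)$ the span of a line and an external point is already a full plane; no eigenvalue or incidence-matrix machinery is needed. I expect the routine dimension count of $\langle P_1,P_2\rangle$ to be the spot where one should be most careful, since one must use vector-space dimensions (not projective dimensions) in the inclusion–exclusion formula, and then translate back via the Gaussian coefficient for containment. Once those bookkeeping details are pinned down, the proof is essentially the three lines sketched above.
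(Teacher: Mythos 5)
Your proof is correct, and it takes a somewhat different route from the paper's. The paper leans on two earlier ingredients: Lemma \ref{3overlap} (a point outside $P1$ lies on at most 3 of the 7 hyperplanes through $P1$) and a duality assertion that $P1$ and $P2$ share exactly 3 common hyperplanes; it then observes that the 4 points of $P2\setminus P1$ already use up their quota of 3 incidences on those common hyperplanes, so the remaining 4 hyperplanes avoid them. You instead prove a direct ``jump'' dichotomy --- any hyperplane containing $P1$ and even one point of $P2\setminus P1$ must contain $\langle L,q\rangle=P2$, hence all four points --- and you replace the duality assertion by an explicit computation: $\langle P1,P2\rangle$ is a $3$-flat (vector dimension $3+3-2=4$), and the number of hyperplanes through a $3$-flat in $\mathbb{PG}(5,\mathbb{GF}(2))$ is $\phi(1,0,2)=3$. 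Both arguments are sound; yours is self-contained (no appeal to Lemma \ref{3overlap} or to an unproved duality count), works over any field since the jump property only uses that a line plus an external point spans a plane, and yields the sharper conclusion that \emph{exactly} 3 of the 7 hyperplanes contain all four points while exactly 4 contain none, whereas the paper's version is shorter once Lemma \ref{3overlap} is in hand. One small bookkeeping point you handled correctly but should keep explicit: $V(P1)\cap V(P2)=V(L)$ (the intersection of the vector subspaces cannot be larger, since its projective points are exactly $P1\cap P2$), which is what licenses the dimension count $\dim V(\langle P1,P2\rangle)=4$.
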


\begin{proof}
A line contains 3 points in $\mathbb{PG}(5,\mathbb{GF}(2))$. Hence,
$P1$ and $P2$ intersect in 3 points. By duality arguments, they intersect in
3 hyperplanes as well. Hence, $P2$ contains 7-3 = 4 points that are not
common to point set of $P1$. By lemma \ref{3overlap}, these 4 points can at
maximum lie on 3 hyperplanes reachable from plane $P1$. Since there are 3
hyperplanes common to  $P1$ and $P2$, and hence these 4 points already lie
on them, they do not further lie on any more hyperplane reachable from $P1$,
but not from $P2$.
\end{proof}

\begin{lem}
\label{ptoverlap}
Let there be 7 hyperplanes $H1, \cdots, H7$ reachable from a plane $P1$ in
$\mathbb{PG}(5, \mathbb{GF}(2))$. Further, let there be any other plane
$P2$, which intersects $P1$ in a point. By lemma \ref{3overlap}, the
6 hyperplanes not containing both $P1$ and $P2$ still intersect $P2$
\underline{maximum} in a line each. Then, (a) Such lines contain the common
point to $P1$ and $P2$, and hence exactly 2 more out of remaining 6 points
of $P2$ that are not common to $P1$, and (b) 3 pairs of hyperplanes out of
the 6 non-common hyperplanes intersect in a (distinct) line each out of
the 3 possible lines in $P2$ containing the common point.
\end{lem}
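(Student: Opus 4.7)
The plan is to work entirely with the vector-space representation described in the preceding section, using the dimension formula $\dim(U \cap V) = \dim U + \dim V - \dim(U+V)$ inside the ambient $6$-dimensional vector space over $\mathbb{GF}(2)$. Throughout, let $q$ denote the unique common point of $P1$ and $P2$ (corresponding to a $1$-dimensional vector subspace), and let $H^{\ast}$ denote the unique hyperplane containing $P1 \cup P2$: its existence follows because the span of $P1$ and $P2$ in vector-space dimensions is $3+3-1=5$, i.e., a $4$-dimensional projective subspace lying inside a unique hyperplane.

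For part (a), I would first observe that every $H_i$ contains $P1$ and therefore contains $q$; and $q \in P2$ by hypothesis. So $q$ lies in each intersection $H_i \cap P2$. For the six hyperplanes $H_i \neq H^{\ast}$, the vector-space dimensions force $\dim(H_i \cap P2) \ge 5+3-6 = 2$, giving projective dimension at least $1$; since $H_i$ does not contain all of $P2$, the intersection is not $P2$ itself, hence is exactly a line. Over $\mathbb{GF}(2)$ a line contains exactly three points, so removing $q$ yields exactly two points of $P2$; these two points cannot lie in $P1$ since $P1 \cap P2 = \{q\}$, which proves (a).

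For part (b), fix any line $\ell$ of $P2$ through $q$. The plane $P2$ is a Fano plane, so it contains exactly $3$ lines through $q$; call them $\ell_1, \ell_2, \ell_3$. I would compute the number of hyperplanes of $\mathbb{PG}(5,\mathbb{GF}(2))$ containing $P1 \cup \ell_k$. Since $\ell_k$ meets $P1$ only at $q$, the span of $P1$ and $\ell_k$ has vector-space dimension $3+2-1=4$, i.e., a $3$-dimensional projective subspace, which by the Gaussian coefficient $\phi(5-3-1,\,4-3-1,\,2) = \phi(1,0,2) = 3$ lies in exactly $3$ hyperplanes. One of these is $H^{\ast}$ (because $H^{\ast}$ contains $P2$ and hence $\ell_k$), leaving exactly $2$ of the six non-common hyperplanes to contain $\ell_k$. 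Since each non-common $H_i$ cuts $P2$ in exactly one of $\ell_1,\ell_2,\ell_3$ by part (a), the six hyperplanes partition as $3$ pairs, one pair per line.

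The main obstacle is not conceptual but bookkeeping: one has to be careful to (i) distinguish hyperplanes containing $\ell_k$ from hyperplanes containing all of $P2$, and (ii) check that the unique hyperplane $H^{\ast}$ really does appear in each of the three counts of $3$, so that it is consistently removed. Once these dimension counts are verified via the Gaussian coefficient, the pigeonhole-style partition into $3$ pairs follows immediately from the equality $6 = 3 \times 2$ with no slack.
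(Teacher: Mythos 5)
Your proof is correct, and it takes a genuinely different (and somewhat tighter) route than the paper's. For part (a), the paper argues by contradiction through Lemma \ref{3overlap}: if the line of intersection missed the common point $A_{c}$, the hyperplane would meet $P2$ in four points, which that lemma forbids. You instead get the conclusion directly from the dimension formula: each non-common $H_i$ meets $P2$ in a subspace of vector dimension exactly $2$, i.e.\ exactly a line, and that line automatically contains $q$ because $q\in P1\subseteq H_i$ and $q\in P2$; this is cleaner and actually establishes (rather than inherits from Lemma \ref{3overlap}) that the intersection is exactly a line. For part (b), the paper writes the seven hyperplanes through $P1$ explicitly in dual $\mathbb{GF}(2)$ coordinates as $H_{c}, H1, H2, H1{+}H_{c}, H2{+}H_{c}, H1{+}H2, H1{+}H2{+}H_{c}$ and exhibits the three pencils $\langle A, H_{c}, A{+}H_{c}\rangle$ through the three $3$-flats spanned by $P1$ and a line of $P2$ through $A_{c}$, thereby naming exactly which two non-common hyperplanes share each line; you instead count, via the Gaussian coefficient $\phi(1,0,2)=3$, that each such $3$-flat lies in exactly three hyperplanes, remove $H^{*}$, and let the equality $6=3\times 2$ together with part (a) force the partition into three pairs. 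Both arguments are sound: the paper's buys an explicit identification of the paired hyperplanes in dual coordinates, while yours is shorter, coordinate-free, and makes the exactness of the counts transparent. One cosmetic remark: since the span of $P1$ and $P2$ has vector dimension $5$, it is not merely ``contained in a unique hyperplane'' --- it already \emph{is} the hyperplane $H^{*}$; this phrasing slip does not affect the argument.
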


\begin{proof}
Let $A_{c}$ be the common atom(point) between planes $P1$ and
$P2$. By duality, exactly one hyperplane will be common to both $P1$ and
$P2$. Let some non-common hyperplane $H_{nc}$ reachable from plane $P1$
intersect plane $P2$ in a line $L1$, that is, $H_{nc} \cap P2 = L1$. Then, $A_{c}
\in L1$. For if it is not, then \\
$\mbox{ }$ $\left|L1 \cap A_{c} \right| = 4$ \\
$\mbox{ }$ $\mbox{Also, }L1 \cup A_{c} \subseteq H_{nc}$ \\
$\mbox{ }$ $\mbox{and, }L1 \cup A_{c} \subseteq P2$ \\
$\mbox{ }$ $\Rightarrow \left| H_{nc} \cap P2 \right| = 4$, a
contradiction to lemma \ref{3overlap}.
Hence, the line $L1$ contains common point $A_{c}$ and 2 more out of
remaining 6 points of $P2$ that are not common to $P1$.
~
\vspace{3pt}\\
Exactly 3 hyperplanes of the nature $H1$, $H2$, $H1+H2$ intersect in a
4-dimensional subspace, in $\mathbb{PG}(5,\mathbb{GF}(2))$.
Such a subspace can always be formed by
taking union of a plane and a line intersecting the plane in a
point, by rank arguments. Let the common hyperplane to $P1$ and $P2$ be
$H_{c}$. Let other hyperplanes reachable from $P1$ be $H1, H2, H1+H_{c},
H2+H_{c}, H1+H2, H1+H2+H_{c}$. Then, the pairs of hyperplanes $\langle$$H1,
H1+H_{c}$$\rangle$, $\langle$$H2, H2+H_{c}$$\rangle$, and $\langle$$H1+H2, H1+H2+H_{c}$$\rangle$, along with $H_{c}$,
form 3 distinct 4-d subspaces, which leads to 3 distinct lines of meet
under plane $P2$, for each pair. This can also be verified from the fact
that there are exactly 3 distinct lines in plane $P2$ that have a point
$A_{c}$ in common. These 3 lines, and their individual unions with $P1$,
leads to reachability from $H1, H_{c}, H1+H_{c}$, $H2, H_{c}, H2+H_{c}$, and
$H1+H2, H_{c}, H1+H2+H_{c}$, respectively.
\end{proof}

\begin{lem}
\label{lem5}
Let there be two hyperplanes $H1$ and $H2$ meeting in a plane $P1$. Both
$H1$ and $H2$ intersect any plane $P2$ disjoint from $P1$ in exactly a
line, by lemma \ref{3overlap}. Then these intersecting lines $L1$(of $H1$ and
$P2$) and $L2$(of $H2$ and $P2$) cannot be the same.
\end{lem}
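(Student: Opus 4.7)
The plan is to argue by contradiction: suppose $L_1 = L_2 = L$, and derive that $L$ would be too big to fit in the available ambient subspace. Concretely, if $L_1 = L_2$ then $L \subseteq H_1 \cap H_2$ (since $L \subseteq H_i$ for each $i$) and also $L \subseteq P_2$. I would then translate these containments to the underlying 6-dimensional vector space of $\mathbb{PG}(5,\mathbb{GF}(2))$ and finish by comparing dimensions against the 4-dimensional subspace $H_1 \cap H_2$.

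First I would fix the vector-dimension dictionary: a plane is a 3-dimensional vector subspace, a line is a 2-dimensional vector subspace, a hyperplane is a 5-dimensional vector subspace, and two distinct hyperplanes in a 6-dimensional vector space intersect in a 4-dimensional vector subspace (a 3-flat projectively). The phrase ``$H_1$ and $H_2$ meet in a plane $P_1$'' I interpret as $P_1 \subseteq H_1 \cap H_2$, so the intersection $H_1 \cap H_2$ is a 3-flat containing $P_1$.

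Next, the key dimension count. Since $L \subseteq P_2$ and the planes $P_1, P_2$ are disjoint (share no projective points, equivalently intersect only at the zero vector), we get $P_1 \cap L = \{\mathbf{0}\}$. By the Grassmann identity,
\begin{equation*}
    \dim(P_1 + L) \;=\; \dim P_1 + \dim L - \dim(P_1 \cap L) \;=\; 3 + 2 - 0 \;=\; 5.
\end{equation*}
But $P_1 \subseteq H_1 \cap H_2$ and $L \subseteq H_1 \cap H_2$, so $P_1 + L \subseteq H_1 \cap H_2$, which has vector dimension only $4$. This gives $5 \le 4$, a contradiction, and hence $L_1 \neq L_2$.

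I do not expect a real obstacle; the argument is essentially one Grassmann-formula application. The only delicate point to be careful about is the exact interpretation of ``meeting in a plane'' (which only guarantees $P_1 \subseteq H_1 \cap H_2$, not equality), but this suffices because the contradiction uses only an upper bound on $\dim(H_1 \cap H_2)$, which is $4$ since $H_1, H_2$ are distinct hyperplanes in $\mathbb{PG}(5,\mathbb{GF}(2))$.
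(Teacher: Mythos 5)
Your proof is correct and follows essentially the same route as the paper: both translate to the underlying vector space, use the disjointness of $P1$ and $P2$ to get $\dim(V(P1)+V(L))=3+2=5$, and derive a contradiction from a common line $L$. The only cosmetic difference is where the contradiction lands — the paper notes that $V(L)\cup V(P1)$ spans each of $V(H1)$ and $V(H2)$ and concludes $H1=H2$, while you observe that a 5-dimensional subspace cannot sit inside the 4-dimensional $V(H1)\cap V(H2)$; your explicit remark that ``meeting in a plane'' only means $P1\subseteq H1\cap H2$ is in fact a more careful reading than the paper's literal $V(H1)\cap V(H2)=V(P1)$.
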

\begin{proof}
Let the vector space of a projective geometry flat $X$ be denoted by
$V(X)$. Flats are sets of points, each of which {\it bijectively}
corresponds to a 1-d vector in the corresponding vector space. Also,
closure of a flat(in terms of containing a point) is defined as
corresponding closure of the vector subspace. Hence, family of
substructures in a projective space is bijectively intertwined to the
family of subspaces in the corresponding vector space. Then, if $L1$ = $L2$ were true, then
\begin{equation}
\label{eqn1}
V(L1)\;=\;V(L2)
\end{equation}
where
\begin{eqnarray}
V(L1)\;=\;V(H1)\cap V(P2) \\
V(L2)\;=\;V(H2)\cap V(P2)
\end{eqnarray}
Also, it is given that
\begin{eqnarray}
V(H1)\cap V(H2)\;=\;V(P1) \\
V(P1)\cap V(P2)\;=\;\phi
\end{eqnarray}
Hence if one takes closure of set of vectors contained in $V(L1)\cup V(P1)$
($L1$ is part of $P2$ which does not intersect with $P1$), it does generate
the entire vector subspace $V(H1)$. Similarly, closure of set of vectors
contained in $V(L2)\cup V(P1)$ generates the entire vector subspace
$V(H2)$. Then from equation \ref{eqn1}, the generated subspaces $V(H1)$ and
$V(H2)$ coincide, a contradiction.
\end{proof}

\subsection{Lemmas Related to Embedded Graphs}
\begin{lem}
\label{9_8}
In a bipartite graph having 9 vertices each in both partite sets, and having
a minimum degree($\delta$) of at least 8, any collection of 3 vertices from
one side is incident on at least 6 common vertices on the other side.
\end{lem}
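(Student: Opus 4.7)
The plan is to observe that in a bipartite graph where each side has exactly 9 vertices and the minimum degree is at least 8, each vertex on one side can fail to be adjacent to at most one vertex on the opposite side. That is, for every vertex $v$ in one partite set, its set of non-neighbors in the opposite partite set has cardinality at most $9-8 = 1$.

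From this, I would argue directly by a union bound on non-neighbors. Fix any three vertices $a_1, a_2, a_3$ on one side. Let $N(a_i)$ denote the neighborhood of $a_i$ in the opposite partite set, and let $\overline{N}(a_i)$ denote the complement of $N(a_i)$ within that opposite side. Then $|\overline{N}(a_i)| \leq 1$ for each $i$, so
\begin{equation*}
\left| \overline{N}(a_1) \cup \overline{N}(a_2) \cup \overline{N}(a_3) \right| \leq 3.
\end{equation*}
The set of vertices on the opposite side that are common neighbors of $a_1, a_2, a_3$ is precisely the complement (within that side, which has 9 vertices) of the union above, so its size is at least $9 - 3 = 6$.

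There is essentially no obstacle here; the only subtlety is making sure the bound on each $|\overline{N}(a_i)|$ is $1$ (not $0$), which uses only the hypothesis $\delta \geq 8$ together with the fact that the opposing side has exactly $9$ vertices. The same argument works symmetrically if the three chosen vertices lie on the other partite set, which is why the lemma can be stated without distinguishing the sides.
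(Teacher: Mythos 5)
Your proof is correct and rests on the same elementary counting fact as the paper's: with $\delta \geq 8$ and only $9$ vertices on the opposite side, each chosen vertex misses at most one vertex there. The paper builds the bound stepwise through pairwise and then triple neighborhood intersections, whereas your complement-and-union-bound phrasing packages the identical count more cleanly; no gap either way.
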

\begin{proof}
Let the vertices of one side be denoted as (a1, a2, $\cdots$, a9), and that
of other side by (b1, b2, $\cdots$, b9). Given $\delta$ = 8, it is obvious
that minimal intersection of neighborhoods of a1 and a2 happens in some(at
least) 7 vertices from the other side. Then the 2 remaining vertices are
$N(a1) - N(a1)\cap N(a2)$ and $N(a2) - N(a1)\cap N(a2)$, respectively. The
neighborhood of vertex a3 may either contain all these 7 vertices
($N(a1)\cap N(a2)$), or the two vertices $N(a1) - N(a1)\cap N(a2)$ and
$N(a2) - N(a1)\cap N(a2)$, and at least 6 vertices out of $N(a1)\cap N(a2)$.
Hence the minimal intersection of neighborhoods of arbitrarily chosen
vertices a1, a2 and a3 is of cardinality 6.
\end{proof}

\begin{lem}
\label{10_8}
In a bipartite graph having 10 vertices each in both partite sets, and having
a minimum degree($\delta$) of at least 8, any collection of 3 vertices from
one side is incident on at least 4 common vertices on the other side.
\end{lem}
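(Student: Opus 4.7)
The plan is to mirror the complementary-neighborhood argument used in the proof of Lemma \ref{9_8}, but applied to partite sets of size 10. Since each vertex has degree at least 8 in a partite set of size 10, the set of \emph{non-neighbors} of any given vertex has cardinality at most $10 - 8 = 2$. Concretely, for each vertex $a_i$ on one side I would define its non-neighborhood $\overline{N}(a_i) = \{b_j : (a_i, b_j) \notin E\}$, and note $|\overline{N}(a_i)| \leq 2$.

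Given any three vertices $a_1, a_2, a_3$ on one side, a vertex $b$ on the other side fails to be a common neighbor precisely when $b$ lies in at least one of the three non-neighborhoods, i.e.\ $b \in \overline{N}(a_1) \cup \overline{N}(a_2) \cup \overline{N}(a_3)$. A straightforward union bound gives $|\overline{N}(a_1) \cup \overline{N}(a_2) \cup \overline{N}(a_3)| \leq 2 + 2 + 2 = 6$, so the number of common neighbors is at least $10 - 6 = 4$, which is the desired conclusion.

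There is no real obstacle here: the degree slack of exactly 2 per vertex combines cleanly with the choice of three vertices to yield the bound of 4 via nothing more than a union bound. Unlike Lemma \ref{9_8}, where the intersection pattern required a brief case analysis to reach the tighter bound of 6, the present statement is already attained by the worst case in which the three non-neighborhoods are pairwise disjoint, and no additional structural argument about overlap is needed.
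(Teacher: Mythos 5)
Your proof is correct and rests on the same elementary counting from the degree slack of $10-8=2$ per vertex that the paper uses; you phrase it via complements and a union bound on the three non-neighborhoods, whereas the paper builds up the same bound by intersecting $N(a_1)\cap N(a_2)$ (at least $6$) and then arguing how $N(a_3)$ can meet it, so the two arguments are essentially identical in substance, with yours slightly cleaner in avoiding the case discussion.
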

\begin{proof}
Let the vertices of one side be denoted as (a1, a2, $\cdots$, a9), and that
of other side by (b1, b2, $\cdots$, b9). Given $\delta$ = 8, it is obvious
that minimal intersection of neighborhoods of a1 and a2 happens in some(at
least) 6 vertices from the other side. The two vertices in $N(a1) -
N(a1)\cap N(a2)$ and two more in $N(a2) - N(a1)\cap N(a2)$ count the 4
remaining vertices on the other side. The
neighborhood of vertex a3 may either contain all these 6 vertices
($N(a1)\cap N(a2)$), or at most all 4 vertices $N(a1) - N(a1)\cap N(a2)$ and
$N(a2) - N(a1)\cap N(a2)$, and at least 4 vertices out of $N(a1)\cap N(a2)$.
Hence the minimal intersection of neighborhoods of arbitrarily chosen
vertices a1, a2 and a3 is of cardinality 4.
\end{proof}

\begin{lem}
\label{11_8}
In a bipartite graph having 11 vertices each in both partite sets, and having
a minimum degree($\delta$) of at least 8, any collection of 3 vertices from
one side is incident on at least 2 common vertices on the other side.
\end{lem}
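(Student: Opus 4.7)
The plan is to follow the same neighborhood-counting template used in Lemmas \ref{9_8} and \ref{10_8}, just adjusting the arithmetic for an 11-vertex side. Let the partite sets be $\{a_1,\dots,a_{11}\}$ and $\{b_1,\dots,b_{11}\}$, and fix three arbitrary vertices $a_1,a_2,a_3$ on one side. Since each $a_i$ has degree at least 8 in a partite set of size 11, it has \emph{at most 3 non-neighbors} among the $b_j$'s.

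First I would bound the pairwise intersection. By inclusion in the 11-vertex set $\{b_1,\dots,b_{11}\}$,
\begin{equation*}
|N(a_1)\cap N(a_2)| \;\geq\; |N(a_1)|+|N(a_2)|-11 \;\geq\; 8+8-11 \;=\; 5.
\end{equation*}
Then I would bring in $a_3$. The vertex $a_3$ fails to be adjacent to at most 3 of the $b_j$'s, so in particular it misses at most 3 vertices of $N(a_1)\cap N(a_2)$. Hence
\begin{equation*}
|N(a_1)\cap N(a_2)\cap N(a_3)| \;\geq\; |N(a_1)\cap N(a_2)|-3 \;\geq\; 5-3 \;=\; 2.
\end{equation*}
An equivalent, slightly cleaner way I would present it is by counting non-neighbors directly: let $\overline{N}(a_i)$ denote the complement of $N(a_i)$ in $\{b_1,\dots,b_{11}\}$; then $|\overline{N}(a_i)|\leq 3$ for each $i$, so
\begin{equation*}
\bigl|\overline{N}(a_1)\cup\overline{N}(a_2)\cup\overline{N}(a_3)\bigr| \;\leq\; 9,
\end{equation*}
and the set of common neighbors, which is the complement of this union inside the 11-element side, has size at least $11-9=2$.

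There is really no obstacle here; the argument is a direct pigeonhole/inclusion-exclusion count, exactly in the spirit of the previous two lemmas, and the only thing to be careful about is that the bound is \emph{tight only in the worst case} where the three non-neighbor sets are pairwise disjoint, so the statement cannot be strengthened to 3 without extra structural input from the projective geometry. I would conclude by noting that the choice of $a_1,a_2,a_3$ was arbitrary, and that by the symmetry of the bipartite graph the same argument applies with the roles of the two sides interchanged.
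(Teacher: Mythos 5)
Your proof is correct and follows essentially the same neighborhood-counting argument as the paper: the pairwise intersection bound $8+8-11=5$ followed by subtracting the at most 3 non-neighbors of the third vertex is exactly the paper's reasoning, and your complement-counting reformulation is just a cleaner packaging of it.
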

\begin{proof}
Let the vertices of one side be denoted as (a1, a2, $\cdots$, a9), and that
of other side by (b1, b2, $\cdots$, b9). Given $\delta$ = 8, it is obvious
that minimal intersection of neighborhoods of a1 and a2 happens in some(at
least) 5 vertices from the other side. The three vertices in $N(a1) -
N(a1)\cap N(a2)$ and three more in $N(a2) - N(a1)\cap N(a2)$ count the 6
remaining vertices on the other side. The
neighborhood of vertex a3 may either contain all these 5 vertices
($N(a1)\cap N(a2)$), or at most all 6 vertices $N(a1) - N(a1)\cap N(a2)$
and  $N(a2) - N(a1)\cap N(a2)$, and at least 2 vertices out of
$N(a1)\cap N(a2)$. Hence the minimal intersection of neighborhoods of
arbitrarily chosen vertices a1, a2 and a3 is of cardinality 2.
\end{proof}

\section{Performance of Code for Burst Errors}
\label{burst_perf}

The strongest applications for this code lie in the areas of mass data
storage such as discs. Here, as pointed earlier in section \ref{intro_sec},
burst errors are the dominant cause of data corruption. Hence we have also
examined the burst error correction capabilities of our code.

In bipartite graph $\mathbb{G}$ constructed from
$\mathbb{P}(5,\mathbb{GF}(2))$, we label the edges with integers, to
map various symbols of the codeword. Such a labeling is not required
to understand/ characterize the random error correction capability of the
code. But here, we label the edges with numbers to try to maximize the burst
error correction capability. This is achieved if each consecutive symbol,
possibly part of a burst, is mapped to edges that are incident on distinct
vertices representing different component decoders. Thus, consecutive
numbered edges, representing consecutively located symbols in input symbol
stream, go to different vertices hosting different RS decoders. Since
there are 63 vertices on one side of the graph, this scheme of numbering
implies that edges incident on vertex 1 are assigned the numbers \{1, 64,
\ldots, 1890\}. Similarly, the edges incident on Vertex 2 are assigned
\{2, 65, \ldots, 1891\}, and so on. This numbering essentially achieves the
effect of \textit{interleaving} of code symbols. If the error correcting
capacity of each component RS decoder is $\mu$(=
$\lfloor\frac{\epsilon}{2}\rfloor$), then the minimum burst error
correcting capacity of $\mathbb{C}$ will be $\mu*63$. For example, for
$\epsilon=5$, $\mu$ is 2, and the minimum burst error correcting capacity
is $2*63=126$. Table \ref{tab:4} gives MATLAB simulation results for
burst error correction for $\epsilon=5$.

\begin{table}[h]
\centering
\begin{tabular}{|c|c|c|}
\hline
No. of errors & failures & Avg. no. of iterations\\ \hline
126 & 0 & 1 \\ \hline
135 & 26 & 2.43 \\ \hline
\end{tabular}
\caption{Burst errors ($\epsilon=5$)}
\label{tab:4}
\end{table}
To demonstrate the \textit{excellent burst error correction capacity} of
our code, we benchmark it against the massive interleaving based codes in
CD-ROMs.  These codes are considered to be very robust to burst errors.
Traditionally in ECMA-130, the encoding utilizes heavy interleaving and
dependence on erasure correction to deal with burst errors. For erasure
correction, one level of decoding identifies the possible locations of the
error symbols. The next level of decoding uses this information to correct
them. The stage/process of interleaving used in CD-ROMs makes the encoding
and decoding slower. We propose two schemes in \cite{swad_rep}, which offer
significant improvement in burst error correction at similar data rates.
Our decoder, being fully parallel in its decoding, can handle larger sets
of data at a time and hence could be used to increase the throughput. In
our schemes, however, we wanted to fit our decoders in place of the heavy
interleaving stage of the CD-ROM decoding data path, which only processes
one frame at a time. Thus, in terms of throughput we will be matching the
CD-ROMs but we will surpass them in burst error correction capability.

\section{A Note on Encoding}
For the encoding process, we derive the parity matrix and find its orthogonal
matrix to get the generator matrix.
Suppose $d=5$ which means that for each sub-code 4 edges act as parity symbols.
The parity matrix for each vertex is given by:
\[\left(\begin{array}{ccccc}
1 & \alpha & \alpha^2 & \ldots & \alpha^{30} \\
1 & \alpha^2 & \ldots & \ldots & \alpha^{60} \\
. & \ldots & \ldots & \ldots & . \\
1 & \alpha^4 & \ldots & \ldots & \alpha^{120} \end{array} \right )\]

where $\alpha=2$ for us. The parity matrix for the 126 vertices is generated
using the above parity matrix and inserting the appropriate entries at the
corresponding edge locations. Each column of the overall parity matrix
corresponds to an edge and there are 126*4 rows.
We then perform row operations to get it in RRE
form. The generator matrix is then easily obtained. A codeword is given by the
product of the message vector with the generator matrix.

The above stated method is not efficient because it uses matrix-vector
multiplication and hence if of $O(n^2)$. More efficient methods could be possible 
by utilizing the structure of the graph. Getting an efficient encoder design is
one of the possibilities of future work in this area.

\section{Results}
\label{results_sec}
For proof of concept, we have done VLSI prototyping of efficient,
throughput-optimal decoder for the expander-like code as well. The code
used was the length-1953 code. (31, 25, 7)
Reed-Solomon codes were chosen as subcodes, and (63 point, 63 hyperplane)
bipartite graph from {\large $\mathbb{P}(5,\mathbb{GF}(2))$} was chosen as
the \textit{expander graph}. The overall expander code was thus (1953,
1197, 761)-code. 

Viewed as a \textit{computation
graph}, every vertex of this graph maps to a \textit{RS decoding
computation}. The input symbols to each of these decoders correspond to
the edges which are incident to the vertex in question. The prototype
implementation was done on a Xilinx XUPV506 board based on LX110T
FPGA, with speed grade of -3. It
uses the RS decoder IP provided by Xilinx itself. The decoder can be made
to skip decoding, to accommodate the modification to Zemor's algorithm done
by us. We could also use it to perform erasure correction, since erasures
arise frequently in secondary storage device data. Using projective
geometry properties, we evolved a \textbf{novel}, throughput-optimal
strategy to \textbf{fold} the parallel vertex computations, such that the
number of RS decoders required to implement the decoder for $\mathbb{C}$ is
only a factor of order of $\mathbb{G}$. This saves a lot of resources, and
can fit on even small FPGAs. The entire folding strategy has been detailed
in \cite{dmaa_pap}.

For such (folded) design, about 25\% of the FPGA slices
were used to implement the decoder.
We used distributed RAM to implement the
memory modules.The post placement-and-routing frequency was found to be
180.83 MHz for the design without erasures, and 180.79 MHz for the
design with erasures. The design was based on $\epsilon=5$, for which it
takes 2611 clock cycles to finish 4 iterations, without erasure decoding.
Adding 217 clock cycles to write data into the memory, we got a throughput
of $\frac{1953}{2828}*181\approx 125Mbytes/s$.
More complete details of this implementation, and its
performance figures such as throughput, can be found in \cite{swad_rep}.
The efficient design of the decoder is patent pending
\cite{expander_patent}. The error-correction performance by {simulating this
VLSI model}{testing this implementation working on FPGA board} is tabulated
as following.

\begin{table}[h]
\centering
\begin{tabular}{|c|c|c|c|c|}
\hline
$\epsilon$ & Latency & Processing Delay & Random errors & Burst errors \\ \hline
5 & 83 & 45 & 141 & 143 \\ \hline
7 & 115 & 77 & 218 & 219 \\ \hline
9 & 155 & 117 & 328 & 295 \\ \hline
\end{tabular}
\caption{Erasure correction results}
\label{erasuretab}
\end{table}

\section{Applications to Storage Media}
\textit{Disc storage} is a general category of secondary
storage mechanisms. Unlike the
now-obsolete 3.5-inch floppy disk, most removable media such as optical discs
do not have an integrated protective casing. Hence they are susceptible to
data transfer problems due to scratches, fingerprints, and other environmental
problems such as dust speckles. These data transfer problems, while the data
is being read, manifests itself in form of bit errors in the digital data
stream. A long sequence of
bit read errors while a track is being read (e.g. a scratch on a track)
can be characterized as \textit{burst error}, while bit read error arising
out of tiny dust speckle masking limited number of pits and lands on a track
leads to \textit{random error}.  The occurrence of such events obviously not
being rare, \textit{recovery of data to maximum extent} in presence of such
errors is an \textit{essential} subsystem within most computing systems,
such as CPU and disc players.

\subsection{Application to CD-ROM}
\label{cd_app}
As indicated in section \ref{burst_perf}, we have also benchmarked the
burst error correcting capability of our code against CD-ROM decoding based
on ECMA-130 \cite{ECMA}. Based on this benchmarking, we have proposed 2
novel schemes for CD-ROM encoding and decoding stages. These schemes are
based on the expander-like codes described in this paper. Application of
these codes at various stages of CD-ROM
encoding scheme(and correspondingly in decoding scheme)
\textit{substantially} increases the burst error correcting capability of
the disc drive.

To start with, we \textit{recall} from \cite{ECMA} that the major part of
error correction of the CD-ROM coding scheme occurs during the two stages,
RSPC and CIRC. On the encoder side, RSPC stage comes before CIRC stage, while on decoder
side, CIRC stage comes before RSPC stage. To get an idea of the average
error correction capabilities of CDROM scheme, we simulated the CIRC and
the RSPC stages of the ECMA standard in MATLAB.
The details of these stages are as follows.

\begin{description}
\item[CIRC]
This stage leads to interleaving of codeword symbols. The massive
interleaving done here is mainly responsible for the burst error
correction. In a frame of 6976(=32*109*2) symbols, it can maximum correct
480 symbols of burst errors.

\item[RSPC]
After the CIRC stage during decoding, the remaining errors can be considered as \textit{random errors}. The RSPC
stage in decoding then serves to correct these errors using RS decoding as
erasure decoding. If we consider only error
detection and correction, the CIRC + RSPC system \textit{on an average} corrects a
burst of 270 symbols in a frame of 6976 symbols.

\end{description}

The schemes we propose involve replacing one or both of the CIRC and the
RSPC with encoders and decoders based on our expander-like codes. This
happens in such a way, that we maximize burst error correction, without
suffering too much on the data rate.

\subsubsection{Scheme 1}

In the first scheme, we replace CIRC+RSPC subsystem with \textbf{4
decoders} for
our expander-like codes, $\mathbb{C}$. Hence the RS subcodes used in
$\mathbb{C}$ have block length $d$ as 31(symbols). Further, we fix their
minimum distance as $\epsilon=7$, which also implies that their data rate
is $\frac{k}{n}=\frac{25}{31}=0.806$. The output of corresponding 4
encoders is further interleaved to improve performance, and de-interleaved
on receiving side. 

To construct these subcodes, we take a (255,249,7) RS code, and
\textit{shorten} it by using the first 31 8-bit symbols only. For the
overall code $\mathbb{C}$, the data rate is equal to
($2*r-1$)\cite{sipser}, where $r$ is the rate of the (RS) subcode. Hence the
rate for codes used in each of our encoders/decoders is $2*0.806-1$ =
$0.612$. Thus, the number of message symbols for each decoder is equal to
$1953*0.612$ = $1197$. The rest are therefore parity symbols.

In terms of frames, we set the cumulative input of the encoders, and
correspondingly the cumulative output of decoders, to a stream of 199
frames, each having 24 symbols payload. Assuming that each symbol can be
encoded in 1 byte, this leads to generation of 4776 bytes. With 12
padding bytes added to it, we can re-partition this bigger set of 4788
bytes into 4 blocks of 1197 bytes each(4*1197=4788). Each block of 1197
source symbols can then be worked upon by 4 \textit{parallely} working
encoders. After encoding each block to 1953 symbols, one of the extra
added(padding) bytes is removed from each encoder giving 244 frames of
32 bytes of data. Every RS decoder has $\epsilon=7$, which implies that
it can detect and correct upto 3 errors. Thus, each encoder for
$\mathbb{C}$ will give a burst error correcting capability of
$63*3 = 189$. Since 4 of such encoders work in interleaved fashion, we
get a minimum burst error detection and correction capability of
\textbf{756 among 244 frames}. This is \textit{opposed to 270 in 218
frames}, in the case of CIRC+RSPC subsystem.
One can hence clearly see the massive improvement in burst error
correction, at a comparable code rate(0.62 for our code vs. 0.75 for
CIRC+RSPC).  One disadvantage of this scheme is it being hardware expensive
due to use of many parallel RS decoders. Also, the high throughput of our
decoder is not utilized. We are limited by
the stages before and after the CIRC+RSPC subsystems(see \cite{ECMA}).
Hence, even though our decoder is faster, the advantage is not seen.

\subsubsection{Scheme 2}
\label{scheme2}

This scheme is a more hardware economical scheme, which also increases the
burst error correcting capability. Since our decoder also has a very good
random error correcting capability, we can achieve an error correction
advantage by replacing just the RSPC stage with our encoder/decoder.
\textbf{Two}
of our encoders can take the place of the RSPC encoder in this scheme. Data
from these encoders is then interleaved, and passed on to the CIRC. In the
decoding stage, after CIRC, there is correspondingly de-interleaving
followed by decoding based on our code.

This scheme has the advantage that it increases the error correction
capability. It also matches the code rate of CIRC: 0.75 for CIRC, versus
0.74 for our decoder. Also, it is a hardware economical scheme. MATLAB
simulations show that the burst error rate goes up from 270 for CIRC+RSPC
subsystem, to more than 400 for CIRC and our encoder. Tables \ref{tab:5}
and \ref{tab:6} show some simulation results for this scheme.

\begin{table}[h]
\centering
\begin{tabular}{|c|c|}
\hline
No. of errors & failures \\ \hline
270 & 2 \\ \hline
300 & 45 \\ \hline
400 & 86 \\ \hline
\end{tabular}
\caption{Response to Burst errors for CIRC+RSPC}
\label{tab:5}
\end{table}
\vspace{-.1in}

\begin{table}[h]
\centering
\begin{tabular}{|c|c|}
\hline
No. of errors & failures \\ \hline
400 & 7 \\ \hline
450 & 17 \\ \hline
500 & 26 \\ \hline
\end{tabular}
\caption{Response to Burst errors in Scheme 2}
\label{tab:6}
\end{table}
\vspace{-.1in}

\subsubsection{Meeting Throughput Requirement}
For a 72x CD-ROM read system, the required data transfer rate is
$10.8Mbytes/s$. Recall from section \ref{results_sec} that the decoder for
our codes achieved a throughput of $\approx$ 125 $Mbytes/s$. Hence, this
decoder using our codes can easily be incorporated without hurting
throughput. Moreover, in an ASIC implementation, we would expect the
performance to be better.

\subsection{Application to DVD-R}
\label{dvd_sec}
The same class of codes can also be applied to evolve encoding and decoding
for DVD-ROM. The main error correction in DVD-R is provided by
the RSPC block \cite{dvdr}, which consists of an inner RS(182,172,11) code and an
outer RS(208,192,17) code. The inner code can detect and correct up to
5 errors, while the outer code can detect and correct 8
errors. A detailed analysis shows \cite{swad_rep} that without considering
erasure decoding, we can still detect and correct $2922$ errors
in a burst, in one block of 37856 bytes. If we allow for the inner decoding
to mark as \textbf{erasures}, $5834$ bytes errors can be corrected.

As an alternate decoding scheme, we substitute the RSPC stage of the DVD encoding by encoders
of code $\mathbb{C}$. These encoders are therefore employed during the
transformation of \textit{Data Frames} into \textit{Recording Frames}. We
can replace the RSPC stage with a expander-like code encoder/decoder, made
from point-hyperplane graph of $\mathbb{PG}(8,GF(2))$, and component
RS(255,239,17) codes. The overall burst error correction capability
\textit{without erasures} turns out to be $8*511=4088$ bytes, which is much
greater than 2922. As far as random errors are concerned, MATLAB simulation
results show that around 1990 random errors are always corrected in one
iteration of the decoding itself. The existing standard specifies that
the number of random errors in 8 consecutive ECC blocks must be less than
or equal to 280.  The complete details of this scheme can be found in
\cite{swad_rep}.

This particular application of our codes also brings out the fact that
taking a bipartite graph $\mathbb{G}$ from a \textit{higher-dimensional}
projective space can be advantageous in terms of better rate and better
error correction capacity.

\section{Conclusion}

We have presented the construction and performance analysis for an
expander-like code that is based on a bipartite graph. The
graph, derived from the incidence relations of projective spaces, offers unique
advantages in terms of deriving lower bounds on error correction capabilities.
There are also fundamental advantages in terms of hardware design of
decoder for this code. As the size of the graph increases, practical
implementation of the decoder becomes difficult. Projective geometry,
through lattice embedding properties, offers a natural way of efficient
folding the computations which leads to using fewer processors, while
guaranteeing throughput-optimality \cite{dmaa_pap}.

The error-correction performance of our codes is better than previously
stated in literature. This is because it relaxes some restrictions that
were imposed with respect to the second largest eigenvalue of the graph.
Derivation of bounds of error correction have been presented,
and the \textit{average case performance} of the code is shown to be up to 10 times
better through simulations. Moreover, the code has special implicit interleaving
due to the numbering of the edges. This offers great advantage in burst error
corrections. A natural application of these codes with respect to data storage
media (namely CD-ROMs and DVD-R) has been explored. We have presented schemes
that improve the burst error performance in comparison to existing standards. 

The hardware design for the decoder has been completely worked out. We use
Xilinx RS decoder IPs as the processors. The computations have been efficiently
folded in order to make them fit on a Xilinx LX110T FPGA. We have tested the
design on the FPGA and also exploited the erasure correction ability of the
RS code. Moreover, a general folding strategy has been developed for higher
dimensions of projective geometry that provides a methodology for practically
implementing decoders of higher dimensions. 

Overall, we believe that further research should establish even more
usefulness of our expander-like codes.

%
%
%
%



%

\newpage

%
%
%
%
\begin{appendix}
\vspace{-.1in}
\section{Proof for Random Error Capability in case of $\epsilon$=15}
\label{proof_app}
Before going into details of the proof, we need to establish certain
\textbf{cardinalities} related to $\mathbb{PG}$(5,$\mathbb{GF}$(2)).
From the discussion in section \ref{pg_intro}, we get the following.
\begin{enumerate}
\item Any line in this space is defined by any 2 points. The unique third
point lying on the lines is determined by linear combination of
corresponding one-dimensional subspaces. Hereafter, a line will be
represented as a tuple $\langle$a, b, a+b$\rangle$.
\item Similarly, or dually, 3 hyperplanes intersect in a particular 4-d
projective subspace, or flat.
\item Any plane in this space is defined by 3 non-collinear points, and
their 4 unique linear dependencies in the corresponding vector space.
Hereafter, a plane will be represented as $\langle$a, b, c, a+b, b+c, a+c,
a+b+c$\rangle$, with the non-canonical choice of non-collinear points within this
plane being implicit as $\langle$a, b, c$\rangle$.
\item A plane contains 7 lines, thus being a Fano plane.
\item A plane is reachable from 7 hyperplanes and 7 points in the
corresponding lattice structure through transitive join and meet
operations. Such an hourglass structure will be critical in our proofs.
\item Similarly, a hyperplane is reachable from 31 points: 5 of them being
independent, and others representing the linearly dependent vectors on these.
\end{enumerate}

\subsection{Main Proofs}
Proving the two theorems of section \ref{props} is done by
demonstrating how one can incrementally construct an embedded bipartite
subgraph, by improving over minimum degree of a smaller subgraph. This requires
looking at the planes involved in the construction of the embedding.

\begin{theor}
In the construction of bipartite graph mentioned in section \ref{code_sec},
there exists no embedded subgraph having size of partitions as 9, the
degree of each of whose vertices has a minimum degree($\delta$) of 8.
\end{theor}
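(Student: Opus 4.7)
My plan is to argue by contradiction. Assume an embedded subgraph exists with 9 points $\mathcal{P}$ and 9 hyperplanes $\mathcal{H}$ in $\mathbb{P}(5,\mathbb{GF}(2))$, every vertex of degree at least 8. The strategy is to exhibit a point of $\mathcal{P}$ missed by too many hyperplanes of $\mathcal{H}$, violating the degree hypothesis.

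First, by Lemma \ref{4ind} I can extract three linearly independent points $P_1,P_2,P_3 \in \mathcal{P}$; they span a plane $\pi_1$. In the full graph their common neighbors are exactly the $\phi(2,1,2)=7$ hyperplanes through $\pi_1$, so Lemma \ref{9_8} forces at least $6$ of these to belong to $\mathcal{H}$. Since $\pi_1$ has only $7$ points whereas $|\mathcal{P}|=9$, there exists $P_4 \in \mathcal{P}$ outside $\pi_1$; set $\pi_2 = \overline{P_1 P_2 P_4}$. Applying Lemma \ref{9_8} to $\{P_1,P_2,P_4\}$ similarly places at least $6$ of $\mathcal{H}$ through $\pi_2$.

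Next I would perform a single counting step. Since $\pi_1\cap\pi_2 = \overline{P_1 P_2}$ is a line, the join $\pi_1+\pi_2$ is a projective solid, contained in exactly $\phi(1,0,2)=3$ hyperplanes. Writing $A$ for the set of hyperplanes through $\pi_1$ and $B$ for those through $\pi_2$, we have $|A\cap B|=3$, whence
\[
|\mathcal{H}\cap(A\setminus B)| \;=\; |\mathcal{H}\cap A| - |\mathcal{H}\cap A\cap B| \;\geq\; 6-3 \;=\; 3.
\]
Each hyperplane in $A\setminus B$ contains $\pi_1$ but not $\pi_2$, and since it already contains $P_1$ and $P_2$, it cannot contain $P_4$ (otherwise it would contain $\overline{P_1 P_2 P_4}=\pi_2$). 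Thus at least $3$ hyperplanes of $\mathcal{H}$ miss $P_4$, whereas the degree hypothesis allows at most $9-8=1$. This is the desired contradiction.

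The main obstacle I anticipate is not computational but conceptual: one must be sure that the ``common neighbors in the subgraph'' supplied by Lemma \ref{9_8} are being correctly translated into ``hyperplanes of $\mathcal{H}$ through the spanned plane''—which requires the three chosen points to be linearly independent, exactly where Lemma \ref{4ind} is invoked—and that the Gaussian-coefficient count $|A\cap B|=3$ rests on $\pi_1\cap\pi_2$ being a line rather than a smaller flat. Once these geometric identifications are pinned down, the contradiction emerges from a single inclusion-exclusion together with a pigeonhole on the degree of $P_4$, without needing the heavier machinery developed in Lemmas \ref{3overlap}--\ref{lem5}.
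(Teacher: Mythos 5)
Your proof is correct, and it reaches the contradiction by a genuinely different (and shorter) route than the paper. Both arguments share the same opening moves: Lemma \ref{4ind} to extract three independent points, Lemma \ref{9_8} to force at least $6$ of the $9$ hyperplanes to lie among the $7$ hyperplanes through the spanned plane $\pi_1$, and the observation that a hyperplane through three independent points must contain their span. From there the paper dualizes to build a $6$-a-side biclique threaded through a single plane and then runs a two-case analysis on the leftover $3$ points and $3$ hyperplanes, invoking Lemma \ref{3overlap} to cap their achievable degrees at $5$ or $6$. You instead pick a point $P_4\in\mathcal{P}\setminus\pi_1$ (which exists since $|\pi_1|=7<9$) and count directly: the hyperplanes through $\pi_1$ that also contain $P_4$ are exactly those through the solid $\pi_1+P_4$, of which there are $\phi(1,0,2)=3$, so at least $6-3=3$ members of $\mathcal{H}$ miss $P_4$, while minimum degree $8$ out of $9$ permits it to miss at most one. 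This replaces the paper's case analysis and its reliance on Lemma \ref{3overlap} with a single exact Gaussian-coefficient count, and it makes the quantitative slack ($3$ missed versus $1$ allowed) explicit; the paper's construction, on the other hand, sets up the incremental ``plane-by-plane'' machinery that it then reuses for the harder $10$-a-side case. One small remark: your second invocation of Lemma \ref{9_8} on $\{P_1,P_2,P_4\}$ (placing six of $\mathcal{H}$ through $\pi_2$) is never used in the contradiction — all you need is $|A\cap B|=3$, i.e.\ that only three hyperplanes contain the solid — so that step can be deleted without loss.
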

\begin{proof}
Assume that such a subgraph exists. Then by lemma \ref{9_8}, any 3
points have at least 6 hyperplanes in common, and vice-versa. By lemma
\ref{4ind}, the set of 9 points contains at least 3 non-collinear points. The
6 common hyperplanes to them in the subgraph must contain the (unique) plane
defined by the points. If one of the points is contained in some different plane,
then by lemma \ref{3overlap}, this point can only be reachable from maximum 3
hyperplanes belonging to the first plane, and not (all) 6 common hyperplanes, a
contradiction. Again, from lemma \ref{4ind}, one can pick a subset of 3
independent hyperplanes out of these 6 common hyperplanes. By dual
arguments, these 3 hyperplanes also must have 6 points in common, reachable
from a single unique plane defined by the 3 hyperplanes. Thus, a 6-a-side
bipartite subgraph with reachability defined by a single plane of the
underlying projective space, is embedded in the 9-a-side bipartite subgraph
we are trying to construct.
\begin{list}{\textbullet}{\leftmargin=17pt}
\item \textit{Case 1:} Out of the remaining 3 points(hyperplanes) in the
9-a-side subgraph, we can at most choose 1 point such that it is contained
in all the 6 hyperplanes. This 1 point is the $7^{th}$ point on the 7-7
hourglass passing through a single plane. It also involves the remaining
$7^{th}$ hyperplane being incident on all these 6+1 points. The remaining
2 points are necessarily part of at least one other plane. This
configuration of 2 remaining points and 2 remaining hyperplanes may
maximally form a $K_{2,2}$ complete induced subgraph by considering
whichever number of intervening planes. In terms of their minimum degree,
these 2 remaining points, by lemma {3overlap}, can at maximum be reachable from 3
more hyperplanes reachable from plane $P1$. Hence the maximum degree these
two points achieve is 5, in any possible construction. This contradicts the
presence of assumed subgraph having $\delta$ of 8.

\item \textit{Case 2:} On similar lines, if we choose the 3
remaining points and hyperplanes apart
from the 6-a-side subgraph to form a complete bipartite subgraph $K_{3,3}$ by
any construction, then again by lemma \ref{3overlap}, they can at maximum be reachable from 3 more hyperplanes, reachable from plane $P1$. In such a case,
they maximally achieve a minimum degree of 6, which is still lesser than 
requirement of 8.
\end{list}
\end{proof}

\begin{theor}
In the construction of bipartite graph mentioned in section \ref{code_sec},
there exists no embedded subgraph having size of partitions as 10, the
degree of each of whose vertices has a minimum degree($\delta$) of 8.
\end{theor}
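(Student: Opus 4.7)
The plan is to mirror the strategy used for Theorem \ref{theor1} but to account for the weaker common-neighbor guarantee coming from Lemma \ref{10_8} (four, rather than six). I would assume for contradiction that a 10-10 embedded bipartite subgraph with $\delta \geq 8$ exists, extract a structured hourglass of the underlying projective geometry, and then derive a degree deficit on the vertices that lie outside this structure.

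The first step is to apply Lemma \ref{4ind} to obtain three independent points $a_1, a_2, a_3$ among our 10 (the 10 cannot all be coplanar because a plane in $\mathbb{PG}(5,\mathbb{GF}(2))$ holds only 7 points). They span a unique plane $P_1$, and by Lemma \ref{10_8} have at least 4 common hyperplane neighbors; by the vector-closure reasoning used in the proof of Theorem \ref{theor1}, every such hyperplane must contain $P_1$. Passing to the dual via Lemma \ref{4ind} on these four common hyperplanes produces three independent ones whose intersection is exactly $P_1$, and they in turn have at least 4 common point neighbors, all on $P_1$. So at least a $4 \times 4$ bipartite block already sits inside the hourglass of $P_1$.

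Next I would let $b$ denote the number of our 10 hyperplanes passing through $P_1$, so $b \geq 4$. Because $P_1$ has only 7 points, at least 3 of our 10 points lie off $P_1$, and each such point, by Lemma \ref{3overlap}, is incident to at most 3 of the $b$ through-$P_1$ hyperplanes. Requiring total degree $\geq 8$ then forces $10 - b \geq 5$, i.e., $b \in \{4,5\}$. In the case $b = 5$ I would double-count edges between the off-$P_1$ points and the 5 through-$P_1$ hyperplanes to deduce $a = 5$ (where $a$ is the number of our 10 points on $P_1$), conclude that each of the 5 off-$P_1$ points is incident to every one of the 5 non-through-$P_1$ hyperplanes, and then observe that 5 points cannot live in a line (only 3 on a line) while only $\phi(1,0,2) = 3$ hyperplanes pass through any 3-flat; this forces the 5 off-$P_1$ points onto a second plane $P_1'$ and the 5 non-through-$P_1$ hyperplanes through $P_1'$. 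A case split on $P_1 \cap P_1' \in \{\emptyset, \text{point}, \text{line}\}$ then yields a contradiction in each sub-case, either from counting the available through-both hyperplanes in the ambient geometry or from a degree shortfall obtained by applying Lemma \ref{3overlap} with $P_1$ and $P_1'$ swapped.

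The case $b = 4$ is the main obstacle. Here the edge-count only yields $a \geq 2$, so the rigid consequences available when $b = 5$ are not immediately at hand. The plan is to locate a second plane $P_1^\dagger$ via Lemma \ref{4ind} applied to our 10 hyperplanes (producing three independent ones whose meet is a plane different from $P_1$ unless all three lie among the through-$P_1$ hyperplanes, which can be arranged by avoiding that choice), bound $b(P_1^\dagger) \in \{4,5\}$ by the same mechanism, and reduce either to the already-settled $b = 5$ case or to a configuration where the simultaneous constraints at $P_1$ and $P_1^\dagger$ are jointly infeasible; Lemmas \ref{ptoverlap} and \ref{lem5} give the tools to track how the two planes and their hourglasses interact in the ambient geometry. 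The delicate bookkeeping required to avoid double-counting hyperplanes shared between $P_1$ and $P_1^\dagger$, and to ensure that the residual degree budget on the vertices outside both hourglasses is truly exhausted, is where the bulk of the work will lie.
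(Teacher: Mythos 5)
Your opening moves match the paper's: Lemma \ref{4ind} plus Lemma \ref{10_8} to force a plane $P_1$ with at least four of the ten points on it and at least four of the ten hyperplanes through it. Your bookkeeping then diverges usefully from the paper's (which cases on how many of the ten points lie on $P_1$, from 7 down to 4): you case on $b$, the number of chosen hyperplanes through $P_1$, and the double count $5(8-a)\leq 3(10-a)$ together with $a\leq 5$ does force $a=5$ when $b=5$, after which the five off-$P_1$ points and five non-through-$P_1$ hyperplanes are forced into a $K_{5,5}$ sitting over a second plane $P_1'$. The sub-case split on $P_1\cap P_1'$ can indeed be closed: the line case dies because only $7-3=4$ hyperplanes pass through $P_1'$ but not $P_1$; the disjoint and point cases die because, by Lemma \ref{3overlap} and the hourglass structure of Lemma \ref{ptoverlap}, the two dependent triples available among five of the seven hyperplanes through $P_1'$ give $3$-flats meeting $P_1$ in too few points (or dually, too few hyperplanes through $P_1'$ meet $P_1$ in a usable line). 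So the $b=5$ branch, though asserted rather than executed, is sound in outline.

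The genuine gap is the $b=4$ case, which you yourself flag as ``the main obstacle'' and leave as a plan. This is not a residual detail: it is where essentially all of the paper's labor goes (its Cases 3 and 4), because with only four hyperplanes through $P_1$ the six remaining hyperplanes and the off-$P_1$ points can be distributed over two or even three further planes, and one must split on whether the remaining points contain a collinear triple, introduce planes $P_2$ (and $P_3$), and run careful degree accounting via Lemmas \ref{3overlap}, \ref{ptoverlap} and \ref{lem5} (e.g., the paper's observation that two distinct lines inside a Fano plane require at least 5 of its points, and that a point of one plane can reach at most $3+2+2=7$ hyperplanes spread over three intervening planes) to exhibit a vertex stuck at degree $\leq 7$. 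Saying you will ``reduce to the $b=5$ case or show the constraints at $P_1$ and $P_1^{\dagger}$ are jointly infeasible'' names a hope, not an argument; nothing you have written rules out, for instance, a configuration with $a=4$, $b=4$ and the remaining vertices shared across two other planes, which is exactly the regime the paper's Case 4 grinds through. (A small additional slip: your own first step already gives $a\geq 4$, not merely the $a\geq 2$ you quote from the edge count, so the $b=4$ analysis you sketch starts from weaker information than you actually possess.) As it stands the proposal proves the theorem only in the $b=5$ branch and must be completed before it can be accepted.
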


\begin{proof}
Assume that such a subgraph exists. Then by lemma \ref{10_8}, any 3
points have at least 4 hyperplanes in common, and vice-versa. By lemma
\ref{4ind}, the set of 10 points contains at least 3 non-collinear points. The
4 common hyperplanes to them in the subgraph must contain the (unique) plane
defined by the points. If one of the points is contained in some different plane,
then by lemma \ref{3overlap}, this point can only be
reachable from maximum 3
hyperplanes belonging to the first plane, and not (all) 4 common hyperplanes, a
contradiction. Again, from lemma \ref{4ind}, one can pick a subset of 3
independent hyperplanes out of these 4 common hyperplanes. By dual
arguments, these 3 hyperplanes also must have 4 points in common, reachable
from a single unique plane defined by the 3 hyperplanes. Thus, a 4-a-side
bipartite subgraph with reachability defined by a single plane of the
underlying projective space, is embedded in the 10-a-side bipartite subgraph
we are trying to construct.
~
\vspace{3pt}\\
By considering only one intervening plane, we can maximum go upto 7-a-side
subgraph only. Hence to construct 10-a-side graph, we need to consider at
least one more plane in the construction. We now individually consider
the cases where the maximum number of points taken from any one of the
planes is n: $4\leq n \leq 7$.
\begin{list}{\textbullet}{\leftmargin=17pt}
\item \textit{Case 1:} Assume that the maximally possible set of 7 points
and some number of hyperplanes are taken from the plane $P1$ intervening
the 4-a-side subgraph already present. The number of hyperplanes considered
from $P1$ could therefore be anywhere between 4 and 7. Hence we need to
consider at least one more intervening plane between the remaining
hyperplanes(minimum: 3, maximum: 6) and the 3 remaining points. In the best
possible construction, these remaining hyperplanes and points form a
biclique. Then, any particular hyperplane from this biclique is reachable
from a maximum of all 4 points of this biclique, plus at maximum 3 more
points of plane $P1$; refer lemma \ref{3overlap}. Hence the degree
requirement of these hyperplanes is unsatisfiable using a 7-* partition of
the 10-point set, a contradiction.

\item \textit{Case 2:} Next, assume that 6 points and some number of
hyperplanes are taken from the plane $P1$ intervening the 4-a-side subgraph
already present. The number of hyperplanes considered from $P1$ could
therefore be anywhere between 4 and 7. Hence again we need to consider at
least one more intervening plane between the remaining hyperplanes(minimum:
3, maximum: 6) and the 4 remaining points. In another best
possible construction, these remaining hyperplanes and points form a
biclique. Then, any particular hyperplane from this biclique is reachable
from a maximum of all 4 points of this biclique, plus at maximum 3 more
points of plane $P1$; refer lemma \ref{3overlap}. Hence the degree
requirement of these hyperplanes is unsatisfiable using a 6-* partition of
the 10-point set, a contradiction.

\item \textit{Case 3:} Next, assume that 5 points and some number of
hyperplanes are taken from the plane $P1$ intervening the 4-a-side subgraph
already present. The number of hyperplanes considered from $P1$ could
therefore be anywhere between 5 and 7. Hence again we need to consider at
least one more intervening plane between the remaining hyperplanes(minimum:
3, maximum: 5) and the 5 remaining points.
~
\vspace{3pt}\\
First, we claim that under this case, a subgraph $K_{5,5}$ having one
intervening plane always exists. To see that, let's take the lone boundary
case where 5 points and 4(minimum) hyperplanes are taken from plane $P1$,
and hence a $K_{5,5}$ biclique is not provided by incidences of $P1$.
Then, the remaining 6 hyperplanes must belong to plane/s different from
$P1$. By lemma \ref{3overlap}, they can at maximum be reachable from 3 of the 5
points reachable from $P1$. To satisfy their requirement $\delta \geq 8$,
they should therefore be reachable to all 5 of the remaining points. Hence
the set of 6 remaining hyperplanes and 5 remaining points form a $K_{6,5}$
biclique, and hence also $K_{5,5}$. By lemma \ref{4ind} and the fact that
a plane formed by 3 independent points is reachable from 7 hyperplanes, which
is simultaneously minimum and maximum, this $K_{5,5}$ biclique contains
exactly 1 intervening plane different from $P1$.
~
\vspace{3pt}\\
By abuse of notation, let's call the plane intervening the always-present
$K_{5,5}$ subgraph as $P1$. Then, at maximum 7 hyperplanes can be
considered from $P1$ in the construction, and hence 3, 4 or 5 hyperplanes
and remaining 5 points need to be added to $K_{5,5}$ by considering other
planes. In case when either 3 or 4 hyperplanes are considered from other
planes, the set of 5 remaining points cannot have their degree requirements
satisfied. For, these points can be reachable from maximum 4 hyperplanes from
other planes, and maximum 3 more, considering plane $P1$(refer lemma
\ref{3overlap}). Hence we look into the case when 5 hyperplanes, and 5
points are considered by looking into other planes.
~
\vspace{3pt}\\
In this case, each hyperplane out of 5 remaining hyperplanes needs to be
reachable from 3 different points reachable from $P1$. These 3 different
points cannot be independent, for if they were, then the corresponding
hyperplane would have been on $P1$ rather than any other plane. Hence each
one out of 5 such collections of 3 points forms a line. Given a plane in
$\mathbb{PG}(5,\mathbb{GF}(2))$ having its point set as
$\langle$a,b,c,a+b,b+c,a+c,a+b+c$\rangle$, it is immediately obvious that no subset of
5 points contains 5 different lines. In fact, to have 5 different lines
contained in some point subset, the minimum size of the subset required is
7. Hence all possible constructions in this case leaves at least one
hyperplane not having its degree requirement satisfied, a contradiction.

\item \textit{Case 4:} Finally, assume that 4 points and some number of
hyperplanes are taken from the plane $P1$ intervening the 4-a-side subgraph
already present. The number of hyperplanes considered from $P1$ could
therefore be anywhere between 4 and 7. Hence again we need to consider at
least one more intervening plane between the remaining hyperplanes(minimum:
3, maximum: 6) and the 6 remaining points. We consider following two
cases.
\begin{list}{$\diamond$}{\leftmargin=18pt}
\item In this case, we assume that the remaining 6 points contain at least
one subset of size 3 forming a line. At least one point out of 3 remaining
points of this 6-set will be not be part of this line(a line has maximum 3
points in $\mathbb{PG}(5,\mathbb{GF}(2))$). This point and the line
therefore form a plane $P2$, which is maximal, by our assumption in Case 4.
Since in this case, \textit{any 3} points will have at least 4 common
hyperplanes to satisfy their degree requirements, the 3 independent points
of plane $P2$ will have 4 hyperplanes in common, and vice-versa. The
remaining 2 hyperplanes, hereafter referred as $H1$ and $H2$, do not
contain both $P1$ and $P2$. More formally, by lemmas \ref{ptoverlap} and
\ref{lem5}, the best case occurs when \\
~
$\mbox{ }$ $H\:i \cap P\:i\;=\;\mbox{a line, for i, j=1 and 2}$ \\
Hence these hyperplanes can be reachable from a maximum of 6 points reachable
from $P1$ and $P2$. In fact, they need to be
reachable from exactly 6 to satisfy
their degree requirements. This reachability from 6 points by each of the 2
hyperplanes must consist of
reachability from one line each from planes $P1$ and $P2$.
~
\vspace{3pt}\\
By lemmas \ref{ptoverlap} and \ref{lem5}, the intersecting lines of $H1$
and $H2$ to say, $P1$, cannot be the same. In a plane of
$\mathbb{PG}(5,\mathbb{GF}(2))$ denoted as $\langle$a,b,c,a+b,b+c,a+c,a+b+c$\rangle$,
one can clearly see that to accommodate 2 different lines, one needs to
consider a subset of at least 5 points. This contradicts our construction
in which we originally had 2 collections of 4 points each contained in two
planes.

\item In this case, we assume that the remaining 6 points do not contain
any subset of size 3 which is dependent. This means that any subset of
triplet of points from among these define a plane. So we will arbitrarily
consider two disjoint triplets from this 6 remaining points, and consider
their planes $P2$ and $P3$. Also note that points from triplet of $P2$
do not lie on $P3$, and vice-versa. For, we are assuming in this sub-case
that 4 coplanar points do not exist among these 6 remaining points.
Again in this case, \textit{any 3} points will have at least 4 common
hyperplanes to satisfy their degree requirements, and vice-versa. Hence we
again have 4 points and hyperplanes being considered in construction, for
each of the planes $P1$, $P2$ and $P3$. Note that there may be common points and
hyperplanes used in this construction. We consider 2 separate sub-subcases
\begin{list}{$\star$}{\leftmargin=18pt}
\item In this case, all the 10 hyperplanes of the required graphs lie
within the set of union of sets of 4 hyperplanes each being considered per
plane.
~
\vspace{3pt}\\
Consider the 3 independent points of the triple of plane $P2$.
None of these points can be same, or linear combination
of any point reachable from plane $P1$. These 3 points were earlier also
illustrated to be independent, and hence not reachable from $P3$ as well.
A join of plane $P1$ and one different point reachable from plane $P2$ in
the corresponding lattice yields one different 3-d flat each, in the
lattice. With respect to plane $P1$ where we are considering a set of 4
hyperplanes, it is obvious that in $\mathbb{PG}(5,\mathbb{GF}(2))$, at
least 3 of these hyperplanes are independent. If the remaining hyperplane
is dependent on 2 of these 3, then a complete 3-d flat is reachable from
this set of 4 hyperplanes.
By lemma \ref{3overlap}, any point of $P2$ is reachable
from at most 3 of
the hyperplanes reachable from $P1$, that too when the 3 hyperplanes are
part of a complete 3-d flat. Hence it is possible that a particular
point reachable from $P2$ is also reachable
from the unique 3-d flat, whenever it exists, and thus to the 3 hyperplanes of
$P1$ from this 3-d flat. Since the join of different independent points of
$P2$ with plane $P1$ gives rise to different 3-d flats, and since there is
at most one 3-d flat embedded in the set of 4 hyperplanes being considered
w.r.t. plane $P1$, at least two points reachable from plane $P2$ can only
be reachable from at most 2 hyperplanes reachable from plane $P1$.
A similar conclusion can be reached with respect to the same 3 points of
plane $P2$, and the hyperplanes reachable from $P3$, that are under
consideration for this construction.
In the best case, 2 distinct points out of the 3 points reachable
from $P2$ have a degree of 3 towards planes $P1$ and $P3$, respectively.
Hence at least 1(the remaining one) point reachable from plane $P2$ is
reachable from at most two hyperplanes each, reachable from planes $P1$ and
$P3$. A similar point can also be located on plane $P3$, the 3 points
reachable from which have identical relation to the point sets of remaining
2 planes.
~
\vspace{3pt}\\
Without loss of generality, we further claim that at most 3 of the 10
hyperplanes used in construction remain outside of those considered
for planes $P1$ and $P3$(or $P2$). When planes $P1$ and $P3$ are
disjoint, they cover 8 of the 10 required hyperplanes of the construction.
If the planes meet in a point, then by duality arguments, they also meet in
a hyperplane. In which case, they cover 7 out of 10 required hyperplanes of
the construction. If $P1$ and $P3$ meet in a line(the last possible
scenario), then $P3$ has 1 point exclusively belonging to itself. Hence in
all cases, either $P2$ or $P3$ has at most 3 points lying on it. On both
these planes, we have located at least 1 point, which is reachable from at
most 2 points each to the remaining 2 planes. Hence in all scenarios, there
exists at least one point, which can be reachable from at most 3+2+2 = 7
hyperplanes, thus invalidating the construction of this case.

\item In this case, all the 10 hyperplanes of the required graphs do not lie
within the set of union of sets of 4 hyperplanes each being considered per
plane. Hence, there is at least 1 hyperplane that is not covered by planes
$P1$, $P2$ and $P3$, i.e. it is not reachable from either of these. By lemma
\ref{3overlap} and the fact that in the assumption for this case, the triplet
of points of $P2$ and $P3$ are not collinear, one can see that the
points of the planes $P2$ and $P3$ provide degree at most 2 each to the
above hyperplane. Additionally, it may provide degree of/be reachable from
maximum 3 points lying on plane $P1$. By considering the planes $P1$, $P2$
and $P3$, we have exhausted all the points of the construction, and the maximum
degree this particular hyperplane has achieved so far is 3+2+2 = 7,
that is clearly not sufficient.
\end{list}
\end{list}
\end{list}
\end{proof}
%
\section{An Eigenvalue-based Approach for Deriving $\xi$}
\label{eigen}
This section provides an \textit{easier, alternative} approach towards deriving
\textit{weaker} upper bounds on random error correction capability of our code.
This approach can be used for the extreme cases when one has to consider the code rate
that is less than 0.1 (equivalently, $\epsilon \geq 15$).
The arguments in this approach are very similar to the ones given by \cite{zemor}.
Let $\mathbf{A}=(\mathbf{a_{ij}})$ be the $2n$ {\tiny X} $2n$ adjacency matrix
of the bipartite graph $\mathbb{G}(V,E)$ of degree $d$.
That is, $a_{ij}=1$ if there
is an edge between the vertices indexed by $i$ and $j$, and $a_{ij}=0$
otherwise. Let $\mathbb{S}$ be the set of vertices of the graph $\mathbb{G}$
that form the minimal configuration of failure. Let $\mathbf{x_s}$ be a
column vector of size $2n$ such that every coordinate indexed by a
vertex of $\mathbb{S}$ equals 1 and the other co-ordinates equal 0.
Now, we have,
\begin{equation}
	\mathbf{x_s^TAx_s \; = \; \sum_{v\epsilon\mathbb{S}} \delta_{G_\mathbb{S}}(v)}
\end{equation}
where $\delta_{G_\mathbb{S}}(v)$ is the degree of vertex $v$ in the subgraph $G_\mathbb{S}$
induced by the vertex set $\mathbb{S}$ in $\mathbb{G}$.

Let $\mathbf{j}$ be the all-ones vector. $\mathbf{j}$ is the eigenvector of
$\mathbf{A}$ associated with the eigenvalue $d$.

Define $\mathbf{y_s}$ such that
\begin{equation}
	\mathbf{x_s = \frac{\mid\mathbb{S}\mid}{2n}j\;+\;y_s}
\end{equation}

$\mathbf{y_s}$ has $\mid \mathbb{S} \mid$ co-ordinates equal to $1-\frac{\mid \mathbb{S} \mid}{2n}$
and $2n - \mid \mathbb{S} \mid$ co-ordinates equal to $-\frac{\mid \mathbb{S} \mid}{2n}$ and
$\mathbf{y_s}$ is orthogonal to $\mathbf{j}$. Therefore, we can write
\begin{eqnarray*}
	\mathbf{x_s^TAx_s} & = & \frac{{\mid \mathbb{S} \mid}^2}{4n^2}\;d\;\mathbf{j.j}\quad+\quad\mathbf{y_s^TAy_s}
\end{eqnarray*}

Since $\mathbf{j.j}=2n$, we have,

\begin{equation}
	\mathbf{x_s^TAx_s}  =  \frac{{\mid \mathbb{S} \mid}^2}{2n}\;d\quad+\quad\mathbf{y_s^TAy_s}
\end{equation}

Now, $\mathbf{y_s}$ is orthogonal to $\mathbf{j}$ and the eigenspace associated to the eigenvalue $d$
is of dimension 1 ($\mathbb{G}$ is connected). Therefore, we have,
$\mathbf{y_s^TAy_s \leq \lambda {\parallel y_s\parallel}^2}$ where $\lambda$ is the second largest eigenvalue of $\mathbf{A}$.
Considering the structure of $\mathbf{y_s}$ as explained above, we have,
\begin{eqnarray*}
	{\parallel\mathbf{y_s}\parallel}^2 & = & \mid\mathbb{S}\mid(1-\frac{\mid\mathbb{S}\mid}{2n})^2\;+\;(2n-\mid\mathbb{S}\mid)(\frac{\mid\mathbb{S}\mid}{2n})^2 \\
 & = & \mid\mathbb{S}\mid\quad-\quad\frac{{\mid\mathbb{S}\mid}^2}{2n}
\end{eqnarray*}
	Since we are looking for subgraphs in which the degree of each vertex is at least a certain value
(say $\gamma$), after combining the various equations and inequalities above, we get,
\begin{eqnarray*}
	\gamma\mid\mathbb{S}\mid & \leq & \mathbf{x_s^TAx_s} \\
			& = & \frac{{\mid \mathbb{S} \mid}^2}{2n}\;d\quad+\quad\mathbf{y_s^TAy_s} \\
			& \leq & \frac{{\mid \mathbb{S} \mid}^2}{2n}\;d\quad+\quad\lambda {\parallel\mathbf{y_s}\parallel}^2 \\
			& = & \frac{{\mid \mathbb{S} \mid}^2}{2n}\;d\quad+\quad\lambda(\mid\mathbb{S}\mid-\frac{{\mid\mathbb{S}\mid}^2}{2n})
\end{eqnarray*}
Finally, since $\mid\mathbb{S}\mid > 0$, we can cancel it from both sides and the expression that we arrive at is,
\begin{equation}
	\mid\mathbb{S}\mid \geq \frac{2n(\gamma - \lambda)}{d - \lambda}
\end{equation}
Because of duality of points and hyperplanes, we get $\xi=\frac{\mid\mathbb{S}\mid}{2}$. Thus,
\begin{equation}
	\xi \geq \frac{n(\gamma - \lambda)}{d - \lambda}
\end{equation}
The above formula is also stated in \cite{hoholdt} in the context of finding the minimum distance of the code proposed by him using $\mathbb{PG}(2,q)$.
In our case, the above formula should be used only for $\epsilon \geq 15$.
For all practical values of $\epsilon$ ($<$ 15), the combinatorial methods detailed
earlier give a  very tight bound.
\end{appendix}

\vfill


\end{document}